\documentclass[11pt,reqno]{amsart}
\usepackage{a4wide}
\usepackage[utf8]{inputenc}
\usepackage{amsmath, mathscinet, amsthm, amscd, amsfonts, enumitem, amssymb, graphicx, xcolor, mathrsfs, dsfont}
\usepackage[english]{babel}
\usepackage[colorlinks=true,linkcolor=blue]{hyperref}

\makeatletter
\newcommand{\myitem}[2]{
  \item[#1]
  \edef\@currentlabel{#1}
  \label{#2}}
\makeatother

\numberwithin{equation}{section}

\theoremstyle{plain}
\newtheorem{theorem}{Theorem}[section]
\newtheorem{proposition}[theorem]{Proposition}

\newtheorem{corollary}[theorem]{Corollary}

\theoremstyle{definition}

\newtheorem{assumption}[theorem]{Assumption}

\theoremstyle{remark}
\newtheorem{remark}[theorem]{Remark}

\newcommand{\al}{\alpha}

\newcommand{\de}{\delta}
\newcommand{\ep}{\varepsilon}
\newcommand{\ph}{\varphi}

\newcommand{\om}{\omega}

\newcommand{\si}{\sigma}

\newcommand{\la}{\lambda}

\newcommand{\Om}{\Omega}

\renewcommand{\d}{\text{\rm d}}

\newcommand{\E}{\mathbb{E}}
\newcommand{\N}{\mathbb{N}}
\renewcommand{\P}{\mathbb{P}}

\newcommand{\R}{\mathbb{R}}

\newcommand{\Bc}{\mathcal{B}}

\newcommand{\Fc}{\mathcal{F}}

\newcommand{\Xc}{\mathcal{X}}

\begin{document}
	
\title[Limits of Unexpected Losses and Risk Ratios]{Asymptotic Behaviour of Unexpected Losses and Risk Ratios for Co-Monotonic Alternatives}

\author{Max Nendel}
\address{Department of Statistics and Actuarial Science, University of Waterloo}
\email{mnendel@uwaterloo.ca}

\thanks{The author thanks Daniel Bartl, Michael Kupper, Eberhard Mayerhofer, Jan Streicher, and Ruodu Wang for valuable comments related to this work.\ The author gratefully acknowledges financial support from the Natural Sciences and Engineering Research Council of
Canada via Discovery Grant no.\ RGPIN-2025-04219.}
\date{\today}

\begin{abstract}

The aggregation of individual risks in large credit and insurance portfolios is guided by diversification and the law of large numbers, which formalizes the convergence of sample averages to their means.\ At the same time, regulatory capital requirements and insurance premia are designed to provide a capital buffer or risk margin above the mean.\ The resulting excess, given by the difference between the nonlinear valuation of the aggregate loss and the corresponding mean, reflects the idea of protection against unexpected losses in the sense of banking and insurance regulation.\ This paper studies the asymptotic behaviour of this excess for large weighted portfolios.\ The main result shows that, for monotone cash-additive risk measures on Banach-lattice-valued Orlicz spaces, convergence along weighted averages satisfying a weak law of large numbers together with a uniform integrability condition is equivalent to scalar continuity at the origin.\ If the risk measure is positively homogeneous, this continuity condition is automatically satisfied, and we prove that the unexpected losses of large weighted portfolios are of order $o(n\overline\lambda_n)$, where $\overline\lambda_n$ denotes the average weight assigned to the first $n$ random variables. We establish analogous asymptotic results for Choquet insurance premia.\ Finally, we derive risk-ratio limits that quantify the potential underestimation arising when diversified portfolios are compared with co-monotonic alternatives.
\medskip
    
	\noindent \emph{Key words:} Law of Large Numbers, Monotone Functional, Risk Measure, Unexpected Loss, Risk Aggregation, Model Risk, Worst-Case Bounds\smallskip
	
	\noindent{\it JEL Classification:} G22; G28; G32 
\end{abstract}

\maketitle

\section{Introduction}\label{sec.intro}

Risk aggregation is a central problem in actuarial science, banking, and financial risk management.\ An insurance or credit portfolio consists of many individual positions whose marginal distributions are usually reasonably well understood through statistical estimation procedures, credit ratings, and historical default frequencies.\ In their role as risk takers, financial institutions such as insurers and banks assume these individual risks from policyholders, borrowers, or clients, pool them on their balance sheet, and rely on diversification to reduce idiosyncratic risk.\ The classical pooling principle asserts that, under a suitable law of large numbers, average losses stabilize around a deterministic benchmark, which is typically given by the mean in the case of identically distributed loss profiles.\ Risks are therefore usually considered insurable, or more generally poolable, if the aggregate portfolio satisfies a weak law of large numbers and if the individual losses satisfy a uniform integrability condition excluding excessively heavy tails, often expressed through moment constraints.\ However, the law-of-large-numbers requirement concerns the joint distribution of the losses, including their dependence structure, and not merely their marginal distributions.\ This makes the condition difficult to verify in practice, especially for large portfolios with heterogeneous exposures, and leaves its verification particularly susceptible to model uncertainty in the dependence structure.

To account for model uncertainty and adverse deviations in the joint loss distribution, regulatory capital frameworks in banking and insurance require financial institutions to assess aggregate losses through nonlinear risk measures, which are used to determine capital buffers against unexpected losses.\ In the Basel credit-risk framework, unexpected losses are specified in the Internal-Ratings-Based (IRB) Approach.\ The Basel IRB formula, from which Risk-Weighted Assets (RWAs) are obtained through the Basel conversion factor, is calibrated at a 99.9\% confidence level, and its core capital component can be interpreted as the difference between a VaR-type conditional loss amount and expected loss, see \cite{BCBS2005IRBRiskWeights}.\ In insurance regulation, Solvency II requires the Solvency Capital Requirement to cover unexpected losses and calibrates it as a one-year Value at Risk of basic own funds at confidence level 99.5\% \cite[Art.~101(3)]{SolvencyIIArt101}.\ These regulatory frameworks motivate the study of unexpected losses of the form
\begin{equation}\label{eq.uel.intro}
R\bigg(\sum_{i=1}^n\lambda_iX_i\bigg)-\sum_{i=1}^n\lambda_i m_i\quad \text{for a large portfolio size }n\in \N,
\end{equation}
where $X_1,\ldots,X_n$ are individual losses with expected losses $m_1,\ldots, m_n$, $\overline \la_n:=\frac{1}n\sum_{i=1}^n\la_i$ with portfolio weights $\lambda_1,\ldots,\lambda_n>0$, and $R$ is a suitable risk measure, cf.\ \cite[Chapter 4]{FoellmerSchied2025}.\ In this paper, we consider individual losses that take values in a general Banach lattice $B$ and assume that the weighted losses are insurable in the sense that
\begin{enumerate}
\item[(i)] the weak law of large numbers
\begin{equation}\label{eq.wlln.intro}
\lim_{n\to \infty}\P\bigg(\bigg\|\frac1{n\overline{\la}_n}\sum_{i=1}^n \la_i(X_i-m_i)\bigg\|>\ep\bigg)= 0\quad \text{for all }\ep>0
\end{equation} 
with $(m_n)_{n\in \N}\subset B$ is satisfied and 
\item[(ii)] the sequence $$ \Bigg(\Phi\bigg(\bigg\|\frac1{n\overline\la_n}\sum_{i=1}^n \la_i(X_i-m_i)\bigg\|\bigg)\Bigg)_{n\in \N}$$
 is uniformly integrable, where $\Phi$ is a Young function that satisfies the $\Delta_2$-condition, cf.\ \cite[Definition 1, p.\ 22]{RaoRen1991}.
\end{enumerate}
Moreover, we consider risk measures $R\colon L^\Phi(\Omega;B)\to B$ that are maps defined on the Orlicz space $L^\Phi(\Omega;B)$ such that
\begin{enumerate}
\item[(R1)] $R(X_1)\leq R(X_2)$ for all $X_1,X_2\in L^\Phi(\Om;B)$ with $X_1\leq X_2$ $\P$-a.s.,
\item[(R2)] $R(X+m)=R(X)+m$ for all $X\in L^\Phi(\Om;B)$ and $m\in B$.
\end{enumerate}
We point out that we do not exclude the case $\Phi\equiv 0$.\ In this case, $L^\Phi(\Om;B)$ becomes the space of all $B$-valued random variables and condition (ii) is automatically satisfied.\ For a systematic treatment of risk measures on Orlicz hearts, we refer to \cite{CheriditoLi2008,CheriditoLi2009}. Under the imposed $\Delta_2$-condition, the Orlicz space $L^\Phi(\Om;B)$ coincides with the Orlicz heart, so that the Banach-space-valued Orlicz setting considered here is compatible with the framework in \cite{CheriditoLi2008,CheriditoLi2009}.

The first main contribution is Theorem \ref{thm.main.orlicz}, which states that
\begin{equation}\label{eq.thm.1.intro}
\lim_{n\to \infty} \bigg\|R\bigg(\frac1{n\overline\la_n}\sum_{i=1}^n \la_i X_i\bigg)-\frac1{n\overline\la_n}\sum_{i=1}^n \la_im_i\bigg\|=0
\end{equation}
for every sequence of random variables $(X_n)_{n\in \N}\subset L^\Phi(\Om;B)$ that satisfies (i) and (ii) with $(m_n)_{n\in \N}\subset B$ if and only if
   \begin{equation}\label{eq.cont.intro}   
   \lim_{\la\to 0} R(\la X)=0\quad \text{for every }X\in L^\Phi(\Om;B_+), 
   \end{equation}
   where $B_+$ denotes the positive cone in $B$.\
We stress that we do not require the risk measure to satisfy structural properties other than (R1) and (R2).\ In particular, we neither assume $R$ to be convex nor positively homogeneous.\ Clearly, if $R$ is positively homogeneous, i.e.,
\[
R(\la X)=\la R(X)\quad \text{for all }\la>0\text{ and }X\in L^\Phi(\Om;B_+),
\]
condition \eqref{eq.cont.intro} is automatically satisfied.\ In this case, Theorem \ref{thm.main.orlicz} asserts that
\begin{equation}\label{eq.smallo.intro}
\lim_{n\to \infty} \frac1{n\overline\la_n}\bigg\|R\bigg(\sum_{i=1}^n \la_i X_i\bigg)-\sum_{i=1}^n \la_im_i\bigg\|=0,
\end{equation}
i.e., the unexpected losses described in \eqref{eq.uel.intro} are $o(n\overline\la_n)$ as $n\to \infty$.\ The proof of Theorem \ref{thm.main.orlicz} uses an abstract characterization of continuity at the origin for monotone operators on $L^0(\Om;B)$, see Theorem \ref{thm.cont.lattice.L0} and Theorem \ref{thm.cont.monotone.L0} in Section \ref{sec.continuity}, which are of independent interest.\ These results are related to the classical Namioka--Klee theorem, which asserts automatic boundedness of positive linear maps between suitable ordered spaces, and to its extensions for convex risk measures, such as \cite{BF2009}. They are also closely related in spirit to Borwein's automatic-continuity and openness results for convex relations \cite{Borwein1987}.  

We also provide a necessary and sufficient condition for \eqref{eq.thm.1.intro} to hold for sequences of uniformly bounded real-valued random variables that satisfy the weak law of large numbers \eqref{eq.wlln.intro} and risk measures defined on $L^\infty(\Om)$, cf.\ Theorem \ref{thm.main.infty}.\ In this case, the continuity in \eqref{eq.cont.intro} has to be replaced by continuity from above and below, cf.\ \cite[Lemma 4.21 and Theorem 4.22]{FoellmerSchied2025}.\ In this context, we also refer to \cite{Delbaen2021LawLargeNumbersRiskMeasures} for a statistical law of large numbers for law-invariant risk measures applied to empirical distributions. 

The second main contribution concerns insurance premia of distortion or Choquet type, which have a long tradition in economic theory and actuarial science, cf.\ \cite{Yaari1987} and \cite{WYP1997}, which provide an axiomatic characterization for preferences and insurance prices in the form of distorted survival probabilities and Choquet integrals.\ A key property of insurance premia based on Choquet integrals is co-monotonic additivity, which represents additivity for risks with perfectly positive dependence, see \cite[Theorem 4.94]{FoellmerSchied2025}.\ We refer to \cite{DenuitEtAl2005,DhaeneEtAl2002Applications,DhaeneEtAl2002Theory} for a detailed discussion of co-monotonicity in the context of actuarial science and finance.\ Section \ref{sec.insurance} shows that, for nonnegative losses, a distortion premium
\begin{equation}\label{eq.distortion.intro}
H(X)=\int_0^\infty h\big(\mathbb{P}(X>x)\big)\,\mathrm{d}x
\end{equation}
exhibits the asymptotic behaviour \eqref{eq.smallo.intro} when $R$ is replaced with $H$ and $B=\R$, provided the weighted weak law of large numbers \eqref{eq.wlln.intro} together with a moment condition of the form
\begin{equation}\label{eq.moment.intro}
\sup_{n\in \N} \|X_n\|_p<\infty
\end{equation}
 is satisfied and the distortion function $h\colon [0,1]\to [0,1]$ fulfills $\int_1^\infty h(\frac1{x^p})<\infty$, cf.\ Theorem \ref{thm.main.insurance}.\ The key difference in view of Theorem \ref{thm.main.orlicz} is that Theorem \ref{thm.main.insurance} replaces the uniform integrability in condition (ii) with the uniform moment bound \eqref{eq.moment.intro}, which is not enough to ensure that the sequence $(|X_n|^p)_{n\in \N}\subset L^p(\Om)$ is uniformly integrable, so that Theorem \ref{thm.main.orlicz} does not apply.\ A related yet again different result is the strong law of large numbers for bounded pairwise independent and identically distributed random variables under totally monotone capacities on Polish spaces in \cite{Maccheroni2005}.

The third contribution is related to risk aggregation under dependence uncertainty.\ A central question in this area is how large the risk of an aggregate portfolio can be when the marginal distributions of the individual risks are known, but the dependence structure is not. A substantial body of literature has developed around this topic, starting with the early contributions \cite{Makarov1982} and \cite{Rueschendorf1982}, which provide sharp bounds for the Value at Risk of the sum of two random variables with given marginals.\ The recent works \cite{zbMATH07710962, DeVecchi2026} show that even an arbitrarily small amount of positive dependence, if it cannot be excluded, may lead to perfectly correlated tails beyond a threshold and may make the tail risk of an aggregate coincide with that of a perfectly dependent portfolio.\ Inspired by these works, we study the asymptotic behaviour of the ratio
\[
\frac{R\big(\sum_{i=1}^n\la_i Z_i\big)}{R\big(\sum_{i=1}^n\la_i X_i\big)},
\]
where $(X_n)_{n\in \N}$ satisfies (i) and (ii) with $m_n=\E[X_1]>0$ for all $n\in \N$ and $(Z_n)_{n\in \N}$ behaves co-monotonically with $R(Z_n)=R(Z_1)$ for all $n\in \N$.\ If $R\colon L^p(\Om)\to \R$ is co-monotonically additive, we find that
\[
\lim_{n\to \infty} \frac{R\big(\sum_{i=1}^n\la_i Z_i\big)}{R\big(\sum_{i=1}^n\la_i X_i\big)}=\frac{R(Z_1)}{\E[X_1]},
\]
see Theorem \ref{thm.asymptotic.risk.ratio}.\ The resulting limit can be interpreted as a factor quantifying the potential risk underestimation in the presence of dependence uncertainty as the portfolio size tends to infinity.\ A similar ratio can be obtained if $R$ is replaced by a distortion premium of the form \eqref{eq.distortion.intro}, see Theorem \ref{thm.asymptotic.ins.ratio}.\ If $X_n$ and $Z_n$ are Bernoulli or exponentially distributed for all $n\in \N$, we obtain an explicit representation of the right-hand side for the Value at Risk and the Expected Shortfall, see Remark \ref{rem.risk.ratio}.

In a last step, we consider the worst-case aggregate risk
\begin{equation}\label{eq.wc.risk.intro}
R_n^{\rm WC}(Z):=\sup_{Y_1\sim Z,\ldots, Y_n\sim Z} R\bigg(\sum_{i=1}^n\la_i Y_i\bigg)
\end{equation}
for a coherent risk measure $R\colon L^p(\Om)\to \R$, $Z\in L^p(\Om)$, and $n\in \N$. Risk aggregation problems of the form \eqref{eq.wc.risk.intro} have been studied in \cite{Ghossoubetal2023} for spectral risk measures and aggregation functions and in \cite{zbMATH07395113} for general tail risk measures.

Theorem \ref{thm.asymptotic.worst.case.risk} shows that, for a coherent risk measure $R\colon L^p(\Om)\to \R$ and $Z\in L^p(\Om)$, the ratio between the worst-case aggregate risk and the aggregate risk under pooling behaves asymptotically like
\[
\lim_{n\to \infty}\frac{R_n^{\rm WC}(Z)}{R\big(\sum_{i=1}^n\la_i X_i\big)}=\sup_{Y\sim Z}\frac{R(Y)}{\E[X_1]}<\infty.
\]
Again, the right-hand side only depends on the marginals and is completely independent of the joint distribution of the sequence $(X_n)_{n\in \N}$.\ If $R$ is also law-invariant, it follows that $\sup_{Y\sim Z}R(Y)=R(Z)$.\ We point out that the Expected Shortfall is a coherent law-invariant risk measure on $L^1(\Om)$ and therefore satisfies all these assumptions with $p=1$.

The rest of the paper is organized as follows.\ Section \ref{sec.uel} introduces the weighted Orlicz-uniform weak law of large numbers and proves the main characterization for risk measures on Orlicz spaces and $L^\infty(\Om)$.\ It also contains the unexpected-loss asymptotic for positively homogeneous risk measures.\ The asymptotic results on Choquet integrals for distorted probabilities with uniform moment bounds instead of uniform integrability are included in Section \ref{sec.insurance}. Section \ref{sec.risk.ratio} compares the aggregate risk of diversified benchmarks with co-monotonic alternatives, and Section \ref{sec.continuity} establishes the abstract continuity principles for monotone functionals on $L^0(\Om;B)$.

\section{Asymptotic Behaviour of Unexpected Losses}\label{sec.uel}

Throughout, let $(\Om,\Fc,\P)$ be a probability space and $\Phi\colon [0,\infty)\to [0,\infty)$ be a convex and continuous function that satisfies the $\Delta_2$-condition, i.e., there exists a constant $K>0$ such that
\[
0=\Phi(0)\leq \Phi(2t)\leq K\Phi(t)\quad \text{for all }t\geq 0,
\]
cf.\ \cite[Definition 1, p.\ 22]{RaoRen1991}.\ The $\Delta_2$-condition is a growth restriction for Young functions that ensures Orlicz spaces exhibit well-behaved structural and functional properties.\ We refer to \cite[Section 2.3]{RaoRen1991} for a detailed discussion of this growth condition and to \cite[Section 3.4]{RaoRen1991} for its role in the theory of Orlicz spaces.

 Let $B$ be a Banach lattice and $\Bc$ be its Borel $\si$-algebra.\ We say that a map $X\colon \Om\to B$ is a \textit{random variable} if $X$ is strongly measurable, i.e., $\Fc$-$\Bc$-measurable and there exists a $\P$-null set $N\in \Fc$ such that $X(\Om\setminus N):=\{X(\om)\colon \om\in \Om\setminus N\}$ is separable.\ Moreover, $L^\Phi(\Om;B)$ denotes the space of all random variables $X\colon \Om\to B$ with
\[
\E\big[\Phi\big(\|X\|\big)\big]<\infty,
\]
which, under the $\Delta_2$-condition, is equivalent to $\E\big[\Phi\big(k\|X\|\big)\big]<\infty$ for all $k>0$.

We point out that the case $\Phi\equiv0$ is not excluded.\ In this case, $L^\Phi(\Om;B)=L^0(\Om;B)$ and we equip $L^\Phi(\Om;B)$ with the topology induced by convergence in probability, see Section \ref{sec.continuity} below for the details.\ If $\Phi\not\equiv 0$, $\Phi(t)\to \infty$ as $t\to \infty$ and we equip $L^\Phi(\Om;B)$ with the usual Luxemburg norm
\[
\|X\|_\Phi:=\inf\Big\{k\in (0,\infty)\colon \E\Big[\Phi\Big(\tfrac{\|X\|}k\Big)\Big]\leq 1\Big\}\quad \text{for }X\in L^\Phi(\Om;B).
\]
Throughout, we say that a map $R\colon L^\Phi(\Om;B)\to B$ is a \textit{risk measure} if
\begin{enumerate}
\item[(i)] $R(X_1)\leq R(X_2)$ for all $X_1,X_2\in L^\Phi(\Om;B)$ with $X_1\leq X_2$ $\P$-a.s.,
\item[(ii)] $R(X+m)=R(X)+m$ for all $X\in L^\Phi(\Om;B)$ and $m\in B$.
\end{enumerate}
For a detailed discussion of risk measures in the case $B=\R$, we refer to \cite[Chapter 4]{FoellmerSchied2025}.\ In the scalar case, the Orlicz framework used here is closely related to the Orlicz-heart setting of \cite{CheriditoLi2008,CheriditoLi2009}, where monetary risk measures on Orlicz hearts are studied and dual representations are obtained in the coherent and convex case.\ Under the $\Delta_2$-condition imposed above, the Orlicz space $L^\Phi(\Om;B)$ coincides with the corresponding Orlicz heart, so the domain considered here is compatible with that setting. 

We consider a fixed sequence $(\lambda_n)_{n\in \N}\subset (0,\infty)$ of portfolio weights and define $\overline{\la}_n:=\frac1n\sum_{i=1}^n\la_i$.\ We say that a sequence of random variables $(X_n)_{n\in \N}\subset L^\Phi(\Om;B)$ satisfies the \textit{$\la$-weighted and $\Phi$-uniformly integrable weak law of large numbers} ($\la$-WLLN$_\Phi$) if there exists a sequence $(m_n)_{n\in \N}\subset B$ such that
\begin{equation}\label{eq.wlln}
\lim_{n\to \infty}\P\bigg(\bigg\|\frac1{n\overline{\la}_n}\sum_{i=1}^n \la_i(X_i-m_i)\bigg\|>\ep\bigg)= 0\quad \text{for all }\ep>0,
\end{equation} 
and
\begin{equation}\label{eq.unif.integrability}
\text{the sequence }\Bigg(\Phi\bigg(\bigg\|\frac1{n\overline\la_n}\sum_{i=1}^n \la_i(X_i-m_i)\bigg\|\bigg)\Bigg)_{n\in \N}\text{ is uniformly integrable.}
\end{equation}

Before stating the main result, we briefly discuss ($\la$-WLLN$_\Phi$) in the following remark.

\begin{remark}\label{rem.discussion.wlln}\
\begin{enumerate}
\item[a)] In the case $\Phi\equiv 0$, the uniform integrability condition \eqref{eq.unif.integrability} in ($\la$-WLLN$_\Phi$) is automatically satisfied, so that ($\la$-WLLN$_\Phi$) reduces to the classical $\la$-weighted weak law of large numbers \eqref{eq.wlln}.\ 
Observe that \eqref{eq.wlln} is the convergence in probability of the triangular weighted averages $\sum_{i=1}^n \frac{\la_i}{n\overline\la_n}(X_i-m_i)$. If $B=\R$ and $(X_n)_{n\in\N}$ is i.i.d.\ with $\E[|X_1|]<\infty$ and $m_i=\E[X_1]$, the Toeplitz condition
\begin{equation}\label{eq.condition.weights}
\max_{ i=1,\ldots, n}\frac{\lambda_i}{n\overline{\la}_n}\to 0\quad \text{as } n\to \infty,
\end{equation}
is equivalent to convergence in probability of the weighted average  to $\E[X_1]$, cf.\ \cite[Theorem 1]{Pruitt1966}.\ We also refer to \cite{Chou2023} for moment conditions ensuring convergence of suitably normalized sums without imposing an explicit dependence structure.\ For the classical strong law of large numbers with $\lambda_n=1$ for all $n\in \N$ in the case where $B$ is a separable Banach space, we refer to \cite[Corollary 7.10]{zbMATH00049190}.
\item[b)] In the case $\Phi\not\equiv 0$, a sequence of random variables $(X_n)_{n\in \N}\subset L^\Phi(\Om;B)$ satisfies ($\la$-WLLN$_\Phi$) with $(m_n)_{n\in \N}\subset B$ if and only if 
\[
\lim_{n\to \infty}\bigg\|\frac1{n\overline\la_n}\sum_{i=1}^n \la_i(X_i-m_i)\bigg\|_\Phi=0.
\]
Indeed, since $\Phi$ satisfies the $\Delta_2$-condition, a sequence $(Y_n)_{n\in \N}\subset L^\Phi(\Om;B)$ converges to the origin if and only if $\E[\Phi(\|Y_n\|)]\to 0$ as $n\to \infty$, which, by the Lebesgue-Vitali theorem \cite[Vol.\ I, Theorem 4.5.4, p.\ 268]{Boga2007} and the continuity of $\Phi$ at $0$, is equivalent to $Y_n\to 0$ in probability as $n\to \infty$ together with uniform integrability of the sequence $\big(\Phi(\|Y_n\|)\big)_{n\in \N}$.
\item[c)] In the case $\Phi(t)=t^p$ for $t\geq0$ with $p\in [1,\infty)$, a sequence $(X_n)_{n\in \N}\subset L^\Phi(\Om;B)$ satisfies \eqref{eq.unif.integrability} if and only if 
\begin{equation}\label{eq.delavalee}
\sup_{n\in \N}\E\bigg[\ph\bigg(\bigg\|\frac{1}{n\overline\la_n}\sum_{i=1}^n\la_i(X_i-m_i)\bigg\|\bigg)\bigg]<\infty
\end{equation}
for some function $\ph\colon [0,\infty)\to [0,\infty)$ with $\frac{\ph(t)}{t^p}\to \infty$ as $t\to \infty$, see e.g.\ \cite[Vol.\ I, Theorem 4.5.9]{Boga2007}.\ A common choice is $\ph(t)=t^q$ for $t\geq 0$ with exponent $q>p$.\ In this case, by the triangle inequality, \eqref{eq.delavalee} is satisfied if
\[
\sup_{n\in \N}\E\big[\|X_n-m_n\|^q\big]<\infty,
\]
which, in the case $q=2$, resembles uniformly bounded variances or standard deviations.\ Considering the Hilbert-space-valued case with $m_n=\E[X_n]$ for all $n\in \N$, pairwise uncorrelatedness of the sequence $(X_n)_{n\in \N}$ together with uniformly bounded variances and \eqref{eq.condition.weights} already implies the $\la$-weighted weak law of large numbers \eqref{eq.wlln} by Chebyshev's inequality.
\end{enumerate}
\end{remark}

\begin{theorem}\label{thm.main.orlicz}
  Let $B$ be a Banach lattice and $R\colon L^\Phi(\Om;B)\to B$ be a risk measure with $R(0)=0$.\ Then, the following statements are equivalent:
  \begin{enumerate}
    \item[(i)] For every $X\in L^\Phi(\Om;B_+)$, 
   \[    \lim_{\la\to 0} R(\la X)=0. \]
\item[(ii)] For every sequence of random variables $(X_n)_{n\in \N}\subset L^\Phi(\Om;B)$ that satisfies {\rm ($\la$-WLLN$_\Phi$)} with $(m_n)_{n\in \N}\subset B$,
\[
\lim_{n\to \infty} \bigg\|R\bigg(\frac1{n\overline\la_n}\sum_{i=1}^n \la_i X_i\bigg)-\frac1{n\overline\la_n}\sum_{i=1}^n \la_im_i\bigg\|=0.
\]
  \end{enumerate}
\end{theorem}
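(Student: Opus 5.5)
The plan is to reduce (ii) to continuity of $R$ at the origin with respect to the topology of $L^\Phi(\Om;B)$: the Luxemburg norm if $\Phi\not\equiv0$, convergence in probability if $\Phi\equiv0$. Set
\[
Y_n:=\frac1{n\overline\la_n}\sum_{i=1}^n\la_i(X_i-m_i),\qquad c_n:=\frac1{n\overline\la_n}\sum_{i=1}^n\la_im_i\in B.
\]
Cash-additivity (R2) gives
\[
R\Big(\frac1{n\overline\la_n}\sum_{i=1}^n\la_iX_i\Big)-c_n=R(Y_n+c_n)-c_n=R(Y_n).
\]
So (ii) is equivalent to $R(Y_n)\to0$ in $B$ for every sequence $(Y_n)$ of this form. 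By Remark \ref{rem.discussion.wlln} b), ($\la$-WLLN$_\Phi$) says exactly that $Y_n\to0$ in $L^\Phi(\Om;B)$; for $\Phi\equiv0$ it says $Y_n\to0$ in probability.

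\emph{(ii)$\Rightarrow$(i).} Fix $X\in L^\Phi(\Om;B_+)$ and a sequence $t_k\to0$. By monotonicity $0=R(0)\le R(t_kX)\le R(|t_k|X)$ for $t_k>0$; negative $t_k$ are handled symmetrically using $-X\le 0$. It therefore suffices to treat $t_k\downarrow0$. I want to realize $t_kX$ as some $Y_n$. Choose $X_1:=t_1X$ and, inductively,
\[
X_n:=\frac{n\overline\la_n\,t_n-(n-1)\overline\la_{n-1}\,t_{n-1}}{\la_n}\,X,\qquad m_n:=0.
\]
Then $\frac1{n\overline\la_n}\sum_{i\le n}\la_iX_i=t_nX$ exactly. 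Since $\|t_nX\|_\Phi=|t_n|\,\|X\|_\Phi\to0$ (resp.\ $t_nX\to0$ in probability), the sequence satisfies ($\la$-WLLN$_\Phi$) by Remark \ref{rem.discussion.wlln} b). Nothing forces the $X_n$ to be nonnegative, and each $X_n$ is a scalar multiple of $X$, hence lies in $L^\Phi$. Thus (ii) yields $R(t_nX)\to0$. Taking subsequences of arbitrary null sequences gives $\lim_{\la\to0}R(\la X)=0$.

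\emph{(i)$\Rightarrow$(ii).} This is the main obstacle: I must show that (i), continuity along rays in the positive cone, implies sequential continuity of $R$ at $0$ in the $L^\Phi$-topology. Here I would invoke Theorem \ref{thm.cont.lattice.L0} / Theorem \ref{thm.cont.monotone.L0} from Section \ref{sec.continuity}, which characterize continuity at the origin of monotone operators. If I had to prove it directly, I would argue by contradiction: suppose $\|Y_n\|_\Phi\to0$ but $\|R(Y_n)\|\ge\ep$ along a subsequence.

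\begin{enumerate}
\item[(1)] Using $-|Y_n|\le Y_n\le|Y_n|$ and monotonicity, $R(-|Y_n|)\le R(Y_n)\le R(|Y_n|)$. It then suffices to control $R(\pm|Y_n|)$, because in a Banach lattice $a\le x\le b$ implies $\|x\|\le\|a\|+\|b\|$. Cash-additivity combined with (i) applied to $-X$ handles the negative side: for $X\ge0$ and $\la>0$ we have $0\ge R(-\la X)\ge R(-\la\|X\|_{\rm sup})$ only in bounded cases, so instead I would use the positive cone argument via the operator $X\mapsto -R(-X)$, which is also monotone and cash-additive.
\item[(2)] Pass to a subsequence with $\|\,|Y_{n_k}|\,\|_\Phi\le 4^{-k}$ (in probability: $\sum\P(\|Y_{n_k}\|>4^{-k})<\infty$).
\item[(3)] Form $Z:=\sum_k 2^k|Y_{n_k}|$. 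This series converges in $L^\Phi(\Om;B_+)$ by completeness, and for the $L^0$ case almost surely by Borel--Cantelli, but then $Z$ is only a random variable with no integrability, which is fine in $L^0$.
\item[(4)] Since $|Y_{n_k}|\le 2^{-k}Z$, monotonicity gives $0\le R(|Y_{n_k}|)\le R(2^{-k}Z)\to0$ by (i). This contradicts $\|R(Y_{n_k})\|\ge\ep$.
\end{enumerate}

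The delicate points are the order-completeness/norm interplay in the Banach lattice ($0\le a\le b\Rightarrow\|a\|\le\|b\|$ suffices) and making sense of $|Y|$ for $B$-valued random variables (strong measurability is preserved since the lattice operations are continuous).
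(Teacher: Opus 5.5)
Your proposal is correct and follows essentially the paper's route. The telescoping choice of $X_n$ together with cash-additivity reduces (ii) to continuity of $R$ at the origin of $L^\Phi(\Om;B)$, and your dominating series $Z=\sum_k 2^k|Y_{n_k}|$ (applied to $R$ and to $X\mapsto -R(-X)$, using the two-sided limit in (i)) is exactly the argument behind Proposition \ref{prop.cont.cone.L0} and Theorems \ref{thm.cont.monotone.L0}--\ref{thm.cont.lattice.L0}, which the paper invokes. The one slip is your claim that monotonicity reduces negative $t_k$ to positive ones: that claim is unjustified, but it is also unnecessary, because your telescoping construction works verbatim for any null sequence $(t_n)$.
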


\begin{proof}
 Considering $X_n:=\frac{n\overline\la_n}{\la_n} Y_n-\sum_{i=1}^{n-1}\frac{\la_i}{\la_n}X_i$ for all $n\in \N$ with an arbitrary sequence $(Y_n)_{n\in \N}\subset L^\Phi(\Om;B)$ that converges to the origin in $L^\Phi(\Om;B)$, the statement in (ii) is equivalent to continuity at the origin of $R$. For $\Phi\equiv 0$, the equivalence of (i) and (ii) then follows from Theorem \ref{thm.cont.lattice.L0} choosing $L=B$ and $p$ as the norm on $B$.\ Now, let $\Phi\not\equiv 0$. Then, (WLLN$_\Phi$) is equivalent to convergence of the sequence $\big(\frac1{n\overline\la_n}\sum_{i=1}^n\la_i(X_i-m_i)\big)_{n\in \N}$ to the origin in $L^\Phi(\Om;B)$ by Remark \ref{rem.discussion.wlln} b).\ Hence, the equivalence of (i) and (ii) again follows from Theorem \ref{thm.cont.lattice.L0} in the case $\#\Om=1$, see Remark \ref{rem.omegaone}, choosing $L=B$, $p$ as the norm $\|\cdot\|$ on $B$, and $L^\Phi(\Om;B)$ as the underlying Banach lattice in Theorem \ref{thm.cont.lattice.L0}. 
\end{proof}

\begin{remark}\label{rem.uel}
We point out that any risk measure $R\colon L^\Phi(\Om;B)\to B$ that is \textit{positively homogeneous}, i.e.,
\[
R(\la X)=\la R(X) \quad \text{for all }\la\in (0,\infty) \text{ and }X\in L^\Phi(\Om;B),
\]
satisfies $R(0)=0$ and condition (i) in the previous theorem.\ In this case, for every sequence of random variables $(X_n)_{n\in \N}\subset L^\Phi(\Om;B)$ that satisfies {\rm ($\la$-WLLN$_\Phi$)} with $(m_n)_{n\in \N}\subset B$,
\begin{equation}\label{eq.uel}
\lim_{n\to \infty} \frac1{n\overline\la_n}\bigg\|R\bigg(\sum_{i=1}^n \la_i X_i\bigg)-\sum_{i=1}^n \la_im_i\bigg\|=0.
\end{equation}
Hence, the unexpected losses
\[
{\rm UEL}_n:=R\bigg(\sum_{i=1}^n \lambda_i X_i\bigg)-\sum_{i=1}^n \lambda_im_i
\]
satisfy $\|{\rm UEL}_n\|=o\big(n\overline\lambda_n\big)$ as $n\to\infty$, i.e., after normalizing by the total portfolio weight, the risk-measure correction above the deterministic benchmark vanishes asymptotically.\ We therefore see that pooling effects automatically carry over from expected losses to unexpected losses.\ If $m_n=m$ for all $n\in \N$, we further have that
\[
\lim_{n\to \infty} \frac1{n\overline\la_n}R\bigg(\sum_{i=1}^n \la_i X_i\bigg)=m.
\]
\end{remark}

We conclude this section with a characterization for risk measures defined on the space $L^\infty(\Om)$ of all ($\P$-equivalence classes of) $\P$-essentially bounded measurable functions $\Om\to \R$, which builds on the classical notions of continuity from above and below, \cite[Lemma 4.21 and Theorem 4.22]{FoellmerSchied2025}.

For a sequence of random variables $(X_n)_{n\in \N}\subset L^\infty(\Om)$, we write $X_n\searrow 0$ or $X_n\nearrow 0$ $\P$-a.s.\ as $n\to \infty$ if $X_{n+1}\leq X_n$ or $X_{n+1}\geq X_n$ $\P$-a.s.\ for all $n\in \N$ and $X_n\to 0$ $\P$-a.s.\ as $n\to\infty$, respectively.\ We say that a risk measure $R\colon L^\infty(\Om)\to \R$ is \textit{continuous from above or below at zero} if $\lim_{n\to \infty}R(X_n)=R(0)$ for any sequence of random variables $(X_n)_{n\in \N}\subset L^\infty(\Om)$ with $X_n\searrow0$ or $X_n\nearrow 0$ $\P$-a.s.\ as $n\to \infty$, respectively.

We say that a sequence of random variables $(X_n)_{n\in \N}\subset L^\infty(\Om)$ satisfies ($\la$-WLLN$_\infty$) if there exists a sequence $(m_n)_{n\in \N}\subset \R$ such that \eqref{eq.wlln} is satisfied with $B=\R$ and
\begin{equation}\label{eq.unif.bddedness}
\text{the sequence }\bigg(\frac1{n\overline\la_n}\sum_{i=1}^n \la_i(X_i-m_i)\bigg)_{n\in \N}\text{ is $\P$-a.s.\ uniformly bounded.}
\end{equation}

\begin{theorem}\label{thm.main.infty}
  Let $R\colon L^\infty(\Om)\to \R$ be a risk measure with $R(0)=0$.\ Then, the following statements are equivalent:
  \begin{enumerate}
    \item[(i)] $R$ is continuous from above and below at zero.
\item[(ii)] For every sequence of random variables $(X_n)_{n\in \N}\subset L^\infty(\Om)$ that satisfies {\rm ($\la$-WLLN$_\infty$)} with $(m_n)_{n\in \N}\subset \R$,
\[
\lim_{n\to \infty} \bigg|R\bigg(\frac1{n\overline \la_n}\sum_{i=1}^n \la_i X_i\bigg)-\frac1{n\overline \la_n}\sum_{i=1}^n \la_i m_i\bigg|=0.
\]
  \end{enumerate}
\end{theorem}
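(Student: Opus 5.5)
Following the proof of Theorem \ref{thm.main.orlicz}, I first reduce (ii) to a continuity statement. If $(Y_n)_{n\in\N}\subset L^\infty(\Om)$ is uniformly bounded and $Y_n\to 0$ in probability, then the recursive choice $X_n:=\frac{n\overline\la_n}{\la_n}Y_n-\sum_{i=1}^{n-1}\frac{\la_i}{\la_n}X_i$ with $m_n=0$ gives $\frac1{n\overline\la_n}\sum_{i=1}^n\la_iX_i=Y_n$, and this sequence satisfies ($\la$-WLLN$_\infty$). Conversely, cash-additivity (R2) gives
\[
R\Big(\tfrac1{n\overline\la_n}\textstyle\sum_{i=1}^n\la_iX_i\Big)-\tfrac1{n\overline\la_n}\textstyle\sum_{i=1}^n\la_im_i=R(Y_n)
\]
with $Y_n:=\frac1{n\overline\la_n}\sum_{i=1}^n\la_i(X_i-m_i)$. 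Hence (ii) is equivalent to the following statement (ii'): $R(Y_n)\to 0=R(0)$ for every $\P$-a.s.\ uniformly bounded sequence $(Y_n)_{n\in\N}$ with $Y_n\to 0$ in probability.

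For (ii')$\Rightarrow$(i), take $X_n\searrow 0$ or $X_n\nearrow 0$ $\P$-a.s. A monotone sequence is bounded by $\max\{\|X_1\|_\infty,0\}$ or $\|X_1\|_\infty$, so the sequence is uniformly bounded. A.s.\ convergence implies convergence in probability, so (ii') yields $R(X_n)\to 0$.

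The main work is (i)$\Rightarrow$(ii'). Suppose $|Y_n|\le C$ a.s.\ and $Y_n\to 0$ in probability, and assume for contradiction that $|R(Y_{n})|\ge\de>0$ along a subsequence. Fix $\ep>0$. By monotonicity (R1) and (R2),
\[
-\ep+R(-C\,\mathds 1_{A_n})\le R(Y_n)\le \ep+R(C\,\mathds 1_{A_n}),\qquad A_n:=\{|Y_n|>\ep\},
\]
so it suffices to show $R(\pm C\mathds 1_{A_n})\to 0$ whenever $\P(A_n)\to 0$. The obstacle is that $\mathds 1_{A_n}$ need not be monotone and need not converge a.s. I would pass to a subsequence with $\sum_k\P(A_{n_k})<\infty$ and set $D_k:=\bigcup_{j\ge k}A_{n_j}$. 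Then $D_k$ is decreasing and $\P(D_k)\to 0$, so $C\mathds 1_{D_k}\searrow 0$ and $-C\mathds 1_{D_k}\nearrow 0$ $\P$-a.s. Monotonicity gives
\[
R(-C\mathds 1_{D_k})\le R(\pm C\mathds 1_{A_{n_k}})\le R(C\mathds 1_{D_k}),
\]
and continuity from above and below forces both bounds to $0$. Therefore $|R(Y_{n_k})|\le\ep+o(1)$. Since this can be done for a further subsequence of any subsequence, and $\ep$ is arbitrary, we get a contradiction with $|R(Y_n)|\ge\de$ once $\ep<\de$.

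Both directions of continuity are needed because both the upper and the lower bound must vanish.
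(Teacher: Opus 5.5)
Your proof is correct and follows essentially the same route as the paper. The same recursive choice of $X_n$ reduces (ii) to $R(Y_n)\to 0$ for uniformly bounded $Y_n\to 0$ in probability. The hard direction is again a Borel--Cantelli subsequence sandwiched between monotone sequences to which continuity from above and below applies.

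The only difference is in the envelopes. The paper sandwiches an a.s.\ convergent subsequence directly between its tail infima and suprema $\inf_{k\geq n}X_k$ and $\sup_{k\geq n}X_k$. Your envelopes $\pm(\ep+C\mathds 1_{D_k})$ cost an extra $\ep$, but show as a small bonus that continuity along $\pm C\mathds 1_{D_k}$, for decreasing sets $D_k$ with $\P(D_k)\to 0$, already suffices.
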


\begin{proof}
 Again, considering $X_n:=\frac{n\overline\la_n}{\la_n}Y_n-\sum_{i=1}^{n-1}\frac{\la_i}{\la_n}X_i$ for all $n\in \N$ with an arbitrary sequence $(Y_n)_{n\in \N}\subset L^\infty(\Om)$ that is $\P$-a.s.\ uniformly bounded and converges to zero in probability, (ii) is equivalent to $\lim_{n\to \infty}R(Y_n)= 0$ for all sequences $(Y_n)_{n\in \N}\subset L^\infty(\Om)$ that are $\P$-a.s.\ uniformly bounded and converge to zero in probability.\ The latter clearly implies (i) since every nondecreasing or nonincreasing sequence that converges $\P$-a.s.\ to zero is $\P$-a.s.\ uniformly bounded and converges to zero in probability.\ Now, assume that $R$ is continuous from above and below at zero and let $(Y_n)_{n\in \N}\subset L^\infty(\Om)$ be $\P$-a.s.\ uniformly bounded and converge to zero in probability.\ By the Borel-Cantelli lemma, every subsequence of $(Y_n)_{n\in \N}$ has a further subsequence $(X_n)_{n\in \N}$ that converges to zero $\P$-a.s.\ Let
 \[
 X_n^1:=\inf_{k\geq n} X_k\quad \text{ and } X_n^2:=\sup_{k\geq n} X_k\quad \text{for all }n\in \N.
 \]
 Then, $X_n^1\nearrow 0$ $\P$-a.s.\ as $n\to \infty$ and $X_n^2\searrow 0$ $\P$-a.s.\ as $n\to \infty$.\ Since $X_n^1\leq X_n\leq X_n^2$ for all $n\in \N$, it follows that
 \[
  0=\lim_{n\to \infty}R\big(X_n^1\big)\leq  \liminf_{n\to \infty}R(X_n)\leq \limsup_{n\to \infty}R(X_n)\leq \lim_{n\to \infty}R\big(X_n^2\big)= 0.
 \]
 We have thus shown that $\lim_{n\to \infty} R(X_n)=0$, which implies that $\lim_{n\to \infty} R(Y_n)=0$.
\end{proof}

\section{Asymptotic Behaviour of Insurance Premia}\label{sec.insurance}

In this section, we focus on pooling effects for a general class of insurance premia.\ Actuarial applications typically focus on nonnegative random variables, since the risks to be pooled are individual losses.\ As before, let  $(\lambda_n)_{n\in \N}\subset (0,\infty)$ be a fixed sequence of portfolio weights.\ We adopt the axiomatic characterization of \cite{WYP1997} and consider the set $\Xc$ of all nonnegative real-valued random variables on an atomless probability space $(\Om,\Fc,\P)$.\ We then consider an insurance premium $H\colon \Xc\to [0,\infty]$ of the form
\begin{equation}\label{eq.inspremia}
H(X)=C\int_0^\infty h\big(\P(X>x)\big)\, \d x
\end{equation}
with a constant $C>0$ and a distortion function $h\colon [0,1]\to [0,1]$, which is assumed to be nondecreasing with $\lim_{u\downarrow 0}h(u)=0$ and $\lim_{u\uparrow 1}h(u)=1$.\ We point out that $H(0)=0$, $H(1)=C>0$, and 
\begin{equation}\label{eq.premium.pos.hom}
H(\al X)=\al H(X) \quad\text{for all }\al>0\text{ and }X\in \Xc,
\end{equation}
cf.\ \cite{WYP1997}, where an axiomatic foundation for insurance premia of the form \eqref{eq.inspremia} is given.

 More generally, $H$ belongs to the class of Choquet integrals, see \cite{WYP1997}, where also an axiomatization for this broader class of insurance premium principles based on Greco's representation theorem \cite{zbMATH03830277} is provided.\ If $c\colon \Fc\to [0,\infty)$ is a Choquet capacity, i.e., a monotone set function with $c(\Om)>0$ and $c(A)=0$ for all $A\in \Fc$ with $\P(A)=0$, the Choquet integral of $X\in L^\infty(\Om)$ is given by
 \[
 \int X\, \d c:=\int_{-\infty}^0 \big(c(X>x)-c(\Om)\big)\, \d x+\int_0^{\infty} c(X>x)\, \d x.
 \]
 In this case, by \cite[Proposition 4.90]{FoellmerSchied2025},
 \[
 R(X):=\frac1{c(\Om)} \int X {\rm d} c
 \]
 defines a positively homogeneous risk measure on $ L^\infty(\Om)$. If $$\lim_{n\to \infty}c(A_n)=0\quad \text{and} \quad \lim_{n\to \infty} c(B_n)=c(\Om)$$ for all sequences $(A_n)_{n\in \N}\subset \Fc$ and $(B_n)_{n\in \N}\subset \Fc$ that are nonincreasing and nondecreasing with $\bigcap_{n\in \N}A_n=\emptyset$ and $\bigcup_{n\in \N}B_n=\Om$, respectively, $R$ is continuous from above and below at zero by dominated convergence, so that, by Theorem \ref{thm.main.infty},
\[
\lim_{n\to \infty} \frac{1}{n\overline \la_n}\bigg|R\bigg(\sum_{i=1}^n \la_i X_i\bigg)-\sum_{i=1}^n \la_i  m_i\bigg|=0
\]
 for every sequence of random variables $(X_n)_{n\in \N}\subset L^\infty(\Om)$ that satisfies ($\la$-WLLN$_\infty$) with $(m_n)_{n\in \N}\subset \R$.

However, the Choquet integral and the distorted premium principle in \eqref{eq.inspremia} do not necessarily take finite values on $L_+^p(\Om):=L^p(\Om;\R)\cap \Xc$.\ Throughout this section, we therefore work under the assumption that
\begin{equation}\label{eq.ass.h}
 \int_1^\infty h\big(\tfrac1{x^p}\big)\, \d x<\infty\quad \text{for some }p\in [1,\infty),
\end{equation}
and set $h(t):=1$ for $t>1$.\ Observe that, for a uniformly distributed random variable $U$,
\[
H\big(U^{-1/p}\big)=C\int_1^\infty h\Big(\P\big(U^{-1/p}>x\big)\Big)\, \d x=C\int_1^\infty h\Big(\P\big(U<\tfrac1{x^p}\big)\Big)\, \d x=C\int_1^\infty h\big(\tfrac1{x^p}\big)\, \d x.
\]
Hence, the condition $\int_1^\infty h\big(\frac1{x^p}\big)\, \d x<\infty$ entails that $H\big(U^{-1/p}\big)<\infty$ for uniformly distributed random variables $U$.\
Moreover, by Markov's inequality, for all $X\in \Xc$ with $\E[X^p]<\infty$,
\begin{align*}
 H(X)&=C\int_0^\infty h\big(\P(X>x)\big)\,\d x\leq C\E[X^p]^{\frac1p}h(1)+C\int_{\E[X^p]^{\frac1p}}^\infty h\Big(\tfrac{\E[X^p]}{x^p}\Big)\,\d x\\
 &= C\E[X^p]^{\frac1p}\bigg(h(1)+\int_1^\infty h\big(\tfrac{1}{x^p}\big)\,\d x\bigg)<\infty.
\end{align*}
Hence, the integrability condition \eqref{eq.ass.h} for the distortion function $h$ yields that the premium principle $H$ takes finite values on $L^p_+(\Om)$.\ Moreover, the integrability condition \eqref{eq.ass.h} allows us to replace the uniform integrability condition in \eqref{eq.unif.integrability} with $\Phi(x)=x^p$ by a simple moment constraint of order $p$, whereas uniform integrability is typically verified by imposing a stronger condition of uniformly bounded moments of order $q>p$.\ We have the following result.

\begin{theorem}\label{thm.main.insurance}
Let $p\in [1,\infty)$ and $H\colon L^p_+(\Om)\to \R$ be given by \eqref{eq.inspremia} with a constant $C>0$ and a nondecreasing distortion function $h\colon [0,1]\to [0,1]$ that satisfies $\lim_{u\downarrow 0}h(u)=0$, $\lim_{u\uparrow 1}h(u)=1$, and $\int_1^\infty h\big(\frac1{x^p}\big)\,\d x<\infty$.\ Moreover, let $(X_n)_{n\in \N}\subset \Xc$ with $\E[X_n]=:m$ for all $n\in \N$, $\sup_{n\in \N}\E[X_n^p]<\infty$, and
\[
\lim_{n\to \infty}\P\bigg(\bigg|\frac1{n\overline\la_n}\sum_{i=1}^n \la_i(X_i-m)\bigg|>\ep\bigg)= 0\quad \text{for all }\ep>0.
\]
Then,
 \[
\lim_{n\to \infty} \bigg|\frac{1}{n\overline\la_n}H\bigg(\sum_{i=1}^n\la_iX_i\bigg)-Cm\bigg|=0.
 \]
\end{theorem}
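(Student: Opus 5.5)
Write $S_n:=\frac1{n\overline\la_n}\sum_{i=1}^n\la_iX_i\in\Xc$. By positive homogeneity \eqref{eq.premium.pos.hom}, the claim is equivalent to $H(S_n)\to Cm$. By Minkowski's inequality and convexity of the weights, $\E[S_n^p]^{1/p}\le\sup_i\E[X_i^p]^{1/p}=:M<\infty$. The hypothesis gives $S_n\to m$ in probability. Note $m\ge 0$.

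I would split the integral in \eqref{eq.inspremia} at levels $0<a<m<b<\infty$ (if $m=0$, take only the upper part).

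\emph{Middle region.} For $x\in[0,m-\de]$ we have $\P(S_n>x)\ge \P(S_n>m-\de)\to1$. For $x\ge m+\de$ we have $\P(S_n>x)\le\P(S_n>m+\de)\to0$. The assumptions $\lim_{u\uparrow1}h(u)=1$ and $\lim_{u\downarrow0}h(u)=0$, together with monotonicity of $h$, then give uniform convergence of $h(\P(S_n>x))$ to $1$ on $[0,m-\de]$ and to $0$ on $[m+\de,b]$. On the strip $[m-\de,m+\de]$ the integrand is at most $1$, so this piece contributes at most $2\de$. Hence
\[
\limsup_n\Big|\int_0^b h\big(\P(S_n>x)\big)\,\d x-m\Big|\le 2\de .
\]

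\emph{Tail region.} This is the main obstacle, since only a uniform $p$-th moment bound is available, not uniform integrability. I would use the Markov estimate from the display before the theorem: $\P(S_n>x)\le M^p/x^p$. Monotonicity of $h$ then gives, for $b\ge M$,
\[
\int_b^\infty h\big(\P(S_n>x)\big)\,\d x\le\int_b^\infty h\big(\tfrac{M^p}{x^p}\big)\,\d x=M\int_{b/M}^\infty h\big(\tfrac1{y^p}\big)\,\d y .
\]
This bound is uniform in $n$, and by \eqref{eq.ass.h} it tends to $0$ as $b\to\infty$. The point is that the integrability of $h$, rather than of the tail of $S_n$, supplies the uniform smallness. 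This is exactly why the moment bound suffices here even though it does not give uniform integrability.

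Finally, combine the pieces. Given $\ep>0$, choose $b$ so that the tail bound is below $\ep$, then let $n\to\infty$, then $\de\downarrow0$. This yields $\limsup_n|H(S_n)/C-m|\le\ep$, which proves the claim after multiplying by $C$.
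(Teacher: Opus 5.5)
Your proof is correct and takes essentially the same route as the paper. Both arguments reduce to $H(S_n)\to Cm$ via positive homogeneity, bound the tail beyond a large level uniformly in $n$ using Markov's inequality and $\int_1^\infty h(x^{-p})\,\d x<\infty$, and handle the bounded part using $S_n\to m$ in probability together with continuity of $h$ at $0$ and $1$. The differences are cosmetic: the paper treats the bounded part by dominated convergence (pointwise convergence of $h(\P(S_n>x))$ to the indicator of $\{x<m\}$) and disposes of $m=0$ by noting $X_n=0$ a.s., whereas you use an explicit $\de$-strip argument.
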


\begin{proof}
If $m=0$, it follows that $X_n=0$ $\P$-a.s.\ for all $n\in \N$ and the statement is trivial.\ Therefore, assume that $m>0$, and let $Y_n:=\frac1{n\overline\la_n}\sum_{i=1}^n \la_iX_i$ for all $n\in \N$.\ Then, by assumption, $Y_n\to m$ in probability as $n\to \infty$ and, using Jensen's inequality,
\[
0<m\leq M:=\sup_{n\in \N}\E[Y_n^p]^{\frac1p}\leq \sup_{n\in \N}\E[X_n^p]^{\frac1p}<\infty.
\]
Let $\ep>0$.\ Since $\int_1^\infty h\big(\tfrac1{x^p}\big)\, \d x<\infty$, there exists some $r\geq 1$ such that
\[
\int_r^\infty h\big(\tfrac1{x^p}\big)\, \d x\leq \frac{\ep}{CM}.
\]
Hence, by Markov's inequality, the dominated convergence theorem, and the continuity of $h$ at $0$ and $1$,
\begin{align*}
\big|H(Y_n)-Cm\big|&\leq C\bigg|\int_0^{rM} h\big(\P(Y_n>x)\big)\,\d x -m\bigg| + C\int_{rM}^\infty h\big(\tfrac{M^p}{x^p}\big)\,\d x\\
& \leq C\bigg|\int_0^{rM} h\big(\P(Y_n>x)\big)\,\d x -m\bigg|+CM\int_{r}^\infty h\big(\tfrac{1}{x^p}\big)\,\d x\\
&\leq C\bigg|\int_0^{rM} h\big(\P(Y_n>x)\big)\,\d x -m\bigg|+\ep\to \ep\quad \text{as }n\to \infty.
\end{align*}
Letting $\ep\downarrow 0$, the claim follows.
\end{proof}

\section{Risk Ratios for Aggregate Risk of Co-Monotonic Alternatives}\label{sec.risk.ratio}

The previous sections identify conditions under which diversification makes the nonlinear aggregate risk asymptotically comparable to the portfolio mean.\ This section analyzes how misspecification of the dependence structure affects aggregate risk.\ In credit and insurance portfolios, marginal loss distributions may be estimated from ratings, underwriting classes, or historical severity data, while joint dependence is much harder to infer. The recent paper \cite{DeVecchi2026} shows that even a small amount of positive dependence, if it cannot be excluded, may allow tail losses to behave as if they were perfectly dependent beyond a certain threshold, leading to a potential risk underestimation if pooling behaviour is assumed for the losses. In this section, we abstract this phenomenon by comparing an insurable benchmark with a co-monotonic alternative.\ We quantify the limiting factor by which the aggregate risk is potentially underestimated if the co-monotonic alternative cannot be excluded.

We say that a risk measure $R\colon L^p(\Om)\to \R$ is \textit{co-monotonic additive} if
\[
R\bigg(\sum_{i=1}^n Y_i\bigg)=\sum_{i=1}^nR(Y_i)
\]
for all random variables $Y_1,\ldots, Y_n\in L^p(\Om)$ that are \textit{co-monotonic}, cf.\ \cite[Definition 4.88 and Lemma 4.95]{FoellmerSchied2025} for a definition and equivalent descriptions of co-monotonicity. We point out that every co-monotonically additive risk measure is positively homogeneous, see, e.g., \cite[Lemma 4.89]{FoellmerSchied2025} for the case of bounded random variables, so that every co-monotonically additive risk measure on  $L^p(\Om)$ automatically satisfies condition (i) in Theorem \ref{thm.main.orlicz}, see also Remark \ref{rem.uel}.

Throughout this section, we work under the following assumption, where $\Xc$ denotes again the space of all nonnegative random variables $\Om\to \R$.

\begin{assumption}\label{ass.application}
 Let $1\leq p\leq q<\infty$ and $(\la_n)_{n\in \N}\subset (0,\infty)$ be a sequence of portfolio weights.\ We consider a sequence $(X_n)_{n\in \N}\subset L^p(\Om)$ with 
 \begin{equation}\label{eq.ass.risk.ratio}
 \sup_{n\in \N} \E\big[|X_n|^q\big]<\infty\quad \text{and}\quad \E[X_n]=\E[X_1]>0\quad \text{for all }n\in \N,
 \end{equation}
 that satisfies the $\la$-weighted weak law of large numbers \eqref{eq.wlln} with $m_n=\E[X_1]$ for all $n\in \N$ and a second sequence $(Z_n)_{n\in \N}\subset L^p(\Om)$ such that $Z_1,\ldots, Z_n$ are co-monotonic for each $n\in \N$.
\end{assumption}

Assumption \ref{ass.application} specifies a portfolio of insurable claims or diversified credits via the sequence of random variables $(X_n)_{n\in \N}$ that satisfies the weak law of large numbers with a moment constraint.\ In the standard case $p=2$, \eqref{eq.ass.risk.ratio} together with \eqref{eq.condition.weights} implies the $\la$-weighted weak law of large numbers \eqref{eq.wlln} with $m_n=\E[X_1]$ for all $n\in \N$ if the sequence $(X_n)_{n\in \N}$ is pairwise uncorrelated, see Remark \ref{rem.discussion.wlln} c).\ On the other hand, the sequence of random variables $(Z_n)_{n\in \N}$ describes an adversarial aggregation scenario without benefits from diversification.\ The ratio in the following theorem can therefore be interpreted as a comparison between the aggregate risk of a portfolio that allows for pooling and a co-monotonic alternative that cannot be ruled out due to dependence uncertainty.

\begin{theorem}\label{thm.asymptotic.risk.ratio}
Let Assumption \ref{ass.application} be satisfied with $p<q$ and $R\colon L^p(\Om)\to \R$ be a co-monotonic additive risk measure.\ If $R(Z_n)=R(Z_1)$ for all $n\in \N$, then
\begin{equation}\label{eq.asymptotic.risk.ratio}
\lim_{n\to \infty}\frac{R\big(\sum_{i=1}^n \la_iZ_i\big)}{R\big(\sum_{i=1}^n \la_iX_i\big)}=\frac{R(Z_1)}{\E[X_1]}.
\end{equation}
\end{theorem}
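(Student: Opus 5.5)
The plan is to treat numerator and denominator separately, each normalized by the total weight $n\overline\la_n$, and take the quotient at the end.

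\emph{Numerator.} Since $Z_1,\ldots,Z_n$ are co-monotonic and $\la_i>0$, the random variables $\la_1Z_1,\ldots,\la_nZ_n$ are also co-monotonic. Co-monotonic additivity of $R$ then gives $R\big(\sum_{i=1}^n\la_iZ_i\big)=\sum_{i=1}^n R(\la_iZ_i)$. Every co-monotonic additive risk measure is positively homogeneous, as noted before Assumption \ref{ass.application}, so this sum equals $\sum_{i=1}^n\la_iR(Z_i)$. Using $R(Z_i)=R(Z_1)$, we obtain exactly
\[
\frac1{n\overline\la_n}R\bigg(\sum_{i=1}^n\la_iZ_i\bigg)=R(Z_1)\quad\text{for every }n\in\N.
\]

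\emph{Denominator.} Here I apply Theorem \ref{thm.main.orlicz}, in the form of Remark \ref{rem.uel}, with $B=\R$, $\Phi(t)=t^p$ and $m_n=\E[X_1]$. The weak law \eqref{eq.wlln} is assumed, so only the uniform integrability \eqref{eq.unif.integrability} has to be checked. Uniform integrability follows from Remark \ref{rem.discussion.wlln} c) with $\ph(t)=t^q$ and $q>p$, which is exactly where the strict inequality $p<q$ enters. For this we need
\[
\sup_{n\in\N}\E\Big[\Big|\tfrac1{n\overline\la_n}\sum_{i=1}^n\la_i(X_i-\E[X_1])\Big|^q\Big]<\infty.
\]
This bound comes from Minkowski's inequality in $L^q$: the weights $\la_i/(n\overline\la_n)$ are nonnegative and sum to one, so the $L^q$-norm of the average is at most $\sup_n\|X_n\|_q+|\E[X_1]|$, which is finite by \eqref{eq.ass.risk.ratio}.

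The positive homogeneity of $R$ verifies condition (i) of Theorem \ref{thm.main.orlicz} and gives $R(0)=0$. Remark \ref{rem.uel} therefore yields
\[
\frac1{n\overline\la_n}R\bigg(\sum_{i=1}^n\la_iX_i\bigg)\to\E[X_1]\quad\text{as }n\to\infty.
\]

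\emph{Conclusion.} Since $\E[X_1]>0$, the denominator is strictly positive for all large $n$, so the ratio is well defined eventually. Dividing the exact identity for the numerator by the convergent denominator gives \eqref{eq.asymptotic.risk.ratio}.

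The step I expect to need the most care is the denominator, specifically verifying the $\Phi$-uniform integrability hypothesis of the $\la$-WLLN$_\Phi$ from the moment bound. The numerator also relies on positive homogeneity of co-monotonic additive $R$ on $L^p(\Om)$, not only on bounded variables; I take this from the remark preceding Assumption \ref{ass.application}.
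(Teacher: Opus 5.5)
Your proposal is correct and follows the paper's own argument. The numerator reduces exactly to $n\overline\la_n R(Z_1)$ via co-monotonic additivity and positive homogeneity, and the denominator is handled by Theorem \ref{thm.main.orlicz} (through Remark \ref{rem.uel}) with $\Phi(t)=t^p$, with uniform integrability supplied by Remark \ref{rem.discussion.wlln} c) using $\ph(t)=t^q$. You also spell out details the paper's one-line proof leaves implicit: co-monotonicity of the $\la_iZ_i$, the Minkowski bound, and eventual positivity of the denominator.
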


\begin{proof}
By assumption and Theorem \ref{thm.main.orlicz} together with Remark \ref{rem.discussion.wlln} c),
\[
\lim_{n\to \infty}\frac{R\big(\sum_{i=1}^n \la_iZ_i\big)}{R\big(\sum_{i=1}^n \la_iX_i\big)}=\lim_{n\to \infty}\frac{R(Z_1)}{R\big(\tfrac{1}{n\overline\la_n}\sum_{i=1}^n \la_iX_i\big)}=\frac{R(Z_1)}{\E[X_1]}.
\]
\end{proof}

\begin{remark}\label{rem.risk.ratio}
We point out that the right-hand side in \eqref{eq.asymptotic.risk.ratio} is completely specified by the individual risk and distribution of $Z_1$ and $X_1$, respectively.\ Consider, for example, the case where $X_n \sim Z_n\sim B(1,p)$, i.e., $\P(Z_n=1)=1-\P(Z_n=0)=p\in (0,1)$ for all $n\in \N$.\ If $R={\rm VaR}^\alpha$ is the Value at Risk with level $\alpha\in (1-p,1)$, we have 
\[
\frac{{\rm VaR}^\alpha(Z_1)}{\E[X_1]}=\frac{1}{p}.
\]
For the expected shortfall ${\rm ES}^\alpha$ at level $\alpha\in (0,1)$, we get
\[
\frac{{\rm ES}^\alpha(Z_1)}{\E[X_1]}=\frac{1}{\max\{p,1-\alpha\}}.
\]
Now, if $X_n \sim Z_n\sim \exp (\la)$ are exponentially distributed with rate $\la>0$ for all $n\in \N$, we get
\[
\frac{{\rm VaR}^\alpha(Z_1)}{\E[X_1]}=-\log(1-\alpha)\quad \text{and}\quad \frac{{\rm ES}^\alpha(Z_1)}{\E[X_1]}=1-\log(1-\alpha).
\]
\end{remark}
By \cite[Theorem 4.94]{FoellmerSchied2025}, Choquet integrals are co-monotonic additive, leading to the following analogous result for insurance premia in the form of distorted probabilities.\  

\begin{theorem}\label{thm.asymptotic.ins.ratio}
Let Assumption \ref{ass.application} be satisfied with $p=q$ and $H\colon L^p_+(\Om)\to \R$ be given by \eqref{eq.inspremia} with $C>0$ and a nondecreasing distortion function $h\colon [0,1]\to [0,1]$ that satisfies $\lim_{u\downarrow 0}h(u)=0$, $\lim_{u\uparrow 1}h(u)=1$, and $\int_1^\infty h\big(\frac1{x^p}\big)\,\d x<\infty$.\ If, additionally, $X_n,Z_n\in \Xc$ and $H(Z_n)=H(Z_1)$ for all $n\in \N$, then
\begin{equation}\label{eq.asymptotic.ins.ratio}
\lim_{n\to \infty}\frac{H\big(\sum_{i=1}^n \la_iZ_i\big)}{H\big(\sum_{i=1}^n \la_iX_i\big)}=\frac1{\E[X_1]}\int_0^\infty h\big(\P(Z_1> x)\big)\, \d x.
\end{equation}
\end{theorem}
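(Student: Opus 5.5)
The proof mirrors that of Theorem \ref{thm.asymptotic.risk.ratio}, with Theorem \ref{thm.main.insurance} playing the role of Theorem \ref{thm.main.orlicz}. The plan is to treat numerator and denominator separately after normalizing both by $n\overline\la_n$.

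\emph{Numerator.} Since $Z_1,\ldots,Z_n\in\Xc$ are co-monotonic and $\la_i>0$, the random variables $\la_1Z_1,\ldots,\la_nZ_n$ are again co-monotonic, because multiplication by a positive constant preserves co-monotonicity. The premium $H$ is a constant multiple of a Choquet integral with respect to the distorted probability $h\circ\P$, which is a monotone set function since $h$ is nondecreasing. Hence, by \cite[Theorem 4.94]{FoellmerSchied2025}, $H$ is co-monotonic additive on nonnegative variables. That theorem is stated for bounded variables, so I would extend it to $L^p_+(\Om)$ by truncation. Apply additivity to $\min\{\la_iZ_i,k\}$, which are still co-monotonic, and let $k\to\infty$ using monotone convergence inside $\int_0^\infty h(\P(\cdot>x))\,\d x$. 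Finiteness of all terms is guaranteed by the Markov bound derived before Theorem \ref{thm.main.insurance}. Together with positive homogeneity \eqref{eq.premium.pos.hom} and the hypothesis $H(Z_i)=H(Z_1)$, this gives
\[
H\Big(\sum_{i=1}^n\la_iZ_i\Big)=\sum_{i=1}^n\la_iH(Z_i)=n\overline\la_n\,H(Z_1)=n\overline\la_n\,C\int_0^\infty h\big(\P(Z_1>x)\big)\,\d x .
\]

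\emph{Denominator.} Assumption \ref{ass.application} with $p=q$ provides exactly the hypotheses of Theorem \ref{thm.main.insurance}: $X_n\in\Xc$, $\E[X_n]=m:=\E[X_1]>0$, $\sup_n\E[X_n^p]<\infty$, and the weighted weak law of large numbers with $m_n=m$. Therefore
\[
\frac1{n\overline\la_n}H\Big(\sum_{i=1}^n\la_iX_i\Big)\to Cm>0 .
\]
In particular the denominator is nonzero for all large $n$, so the ratio is eventually well defined.

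\emph{Conclusion.} Dividing the numerator and the denominator by $n\overline\la_n$ and passing to the limit yields
\[
\frac{C\int_0^\infty h(\P(Z_1>x))\,\d x}{C\,\E[X_1]},
\]
and the constant $C$ cancels. This is the right-hand side of \eqref{eq.asymptotic.ins.ratio}.

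The only step that is not immediate is justifying co-monotonic additivity of $H$ on unbounded nonnegative $L^p$ variables, since the cited result is for $L^\infty(\Om)$. The truncation and monotone convergence argument described above should handle it. The key point is that $\P(\min\{Y,k\}>x)=\P(Y>x)$ for $x<k$, so the integrals converge monotonically to the untruncated values.
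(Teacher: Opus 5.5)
Your proposal is correct and follows the paper's proof: co-monotonic additivity, positive homogeneity and $H(Z_n)=H(Z_1)$ give $H\big(\sum_{i=1}^n\la_iZ_i\big)=n\overline\la_n H(Z_1)$; Theorem \ref{thm.main.insurance} gives $\frac{1}{n\overline\la_n}H\big(\sum_{i=1}^n\la_iX_i\big)\to C\,\E[X_1]$; and $C$ cancels. Your truncation argument extending co-monotonic additivity from bounded to nonnegative $L^p$ variables fills in a detail the paper leaves implicit. Note that it needs $\min\{S,k\}\le\sum_{i=1}^n\min\{\la_iZ_i,k\}\le S$ with $S:=\sum_{i=1}^n\la_iZ_i$ (valid since the summands are nonnegative), so that the tail probabilities agree for $x<k$ and no left-continuity of $h$ is required.
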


\begin{proof}
By assumption and Theorem \ref{thm.main.insurance},
\[
\lim_{n\to \infty}\frac{H\big(\sum_{i=1}^n \la_iZ_i\big)}{H\big(\sum_{i=1}^n \la_iX_i\big)}=\lim_{n\to \infty}\frac{H(Z_1)}{H\big(\tfrac{1}{n\overline\la_n}\sum_{i=1}^n \la_iX_i\big)}=\frac1{\E[X_1]}\int_0^\infty h\big(\P(Z_1> x)\big)\, \d x.
\]
\end{proof}

We conclude this section with the special case where $R\colon L^p(\Om)\to \R$ is a coherent risk measure.\ For a given random variable $Z\in L^p(\Om)$, a risk measure $R\colon L^p(\Om)\to \R$ and $n\in \N$, we consider the worst-case aggregate risk
\[
R_n^{\rm WC}(Z):=\sup_{Y_1\sim Z,\ldots, Y_n\sim Z} R\bigg(\sum_{i=1}^n\la_i Y_i\bigg),
\]
where the supremum is taken over all random variables $Y_1,\ldots, Y_n\in L^p(\Om)$ that follow the same distribution as $Z$.\

We have the following theorem that describes the asymptotic ratio between the worst-case aggregate risk and the aggregate risk under the assumption of pooling and uniformly bounded moments of order $q>p$.

\begin{theorem}\label{thm.asymptotic.worst.case.risk}
Let Assumption \ref{ass.application} be satisfied with $p<q$ and $R\colon L^p(\Om)\to \R$ be a coherent risk measure.\ Then, for all $Z\in L^p(\Om)$,
\[
 \lim_{n\to \infty}\frac{R_n^{\rm WC}(Z)}{R\big(\sum_{i=1}^n\la_i X_i\big)}=\sup_{Y\sim Z}\frac{R(Y)}{\E[X_1]}<\infty.
\]
\end{theorem}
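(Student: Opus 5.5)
We handle the denominator and the numerator separately. A coherent risk measure is positively homogeneous, so by Remark \ref{rem.uel} together with Theorem \ref{thm.main.orlicz} and Remark \ref{rem.discussion.wlln} c), the denominator satisfies
\[
\frac{1}{n\overline\la_n}R\bigg(\sum_{i=1}^n\la_iX_i\bigg)\to \E[X_1]>0.
\]
We use $p<q$ here: the uniform bound on the $q$-th moments gives uniform integrability of $|\frac1{n\overline\la_n}\sum_{i=1}^n\la_i(X_i-\E[X_1])|^p$, so ($\la$-WLLN$_\Phi$) holds with $\Phi(t)=t^p$. It then suffices to show that
\[
R_n^{\rm WC}(Z)=n\overline\la_n\, s,\qquad s:=\sup_{Y\sim Z}R(Y),
\]
for every $n$, and that $s<\infty$.

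\emph{Upper bound.} For $Y_1,\ldots,Y_n\sim Z$, subadditivity and positive homogeneity give
\[
R\bigg(\sum_{i=1}^n\la_iY_i\bigg)\le\sum_{i=1}^n\la_iR(Y_i)\le n\overline\la_n\, s .
\]

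\emph{Lower bound.} For any $Y\sim Z$, take $Y_1=\cdots=Y_n=Y$. Positive homogeneity gives $R(\sum_{i=1}^n\la_iY)=n\overline\la_nR(Y)$. Taking the supremum over $Y\sim Z$ yields $R_n^{\rm WC}(Z)\ge n\overline\la_n\, s$. Hence equality holds, and dividing by the denominator gives the limit $s/\E[X_1]$. (If $s=-\infty$ were possible, the statement would still hold formally; this cannot happen because $R$ is real-valued.)

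\emph{Finiteness of $s$.} This is the main obstacle, since coherence alone does not obviously bound $R$ uniformly on a law class. We plan to use automatic continuity. A monotone, convex, cash-additive functional on the Banach lattice $L^p(\Om)$ is norm-continuous and Lipschitz on bounded sets; this is the Namioka--Klee type result referred to in Section \ref{sec.continuity}. Since $R$ is additionally positively homogeneous, continuity at $0$ gives some $\de>0$ with $R(W)\le 1$ whenever $\|W\|_p\le\de$. Homogeneity then yields
\[
R(W)\le \de^{-1}\|W\|_p\qquad\text{for all }W\in L^p(\Om).
\]
All $Y\sim Z$ have $\|Y\|_p=\|Z\|_p$, so $s\le\de^{-1}\|Z\|_p<\infty$.

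If continuity is not directly available from Section \ref{sec.continuity}, a fallback is to argue by contradiction. Suppose $\|W_k\|_p\le 2^{-k}$ but $R(W_k)\ge 2^k$. Set $W:=\sum_k|W_k|\in L^p(\Om)$. Monotonicity gives $R(W)\ge R(|W_k|)\ge R(W_k)\ge 2^k$ for all $k$, which contradicts $R(W)\in\R$.
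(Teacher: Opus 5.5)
Your proof is correct and follows essentially the same route as the paper. The ingredients match:
\begin{itemize}
\item finiteness of $\sup_{Y\sim Z}R(Y)$ via the Namioka--Klee-type bound $R(W)\le C\|W\|_p$ (the paper cites Corollary \ref{cor.cont.poshom}, which is proved by exactly your continuity-plus-homogeneity argument);
\item the identity $R_n^{\rm WC}(Z)=n\overline\la_n\sup_{Y\sim Z}R(Y)$, obtained from subadditivity and the choice $Y_1=\cdots=Y_n=Y$;
\item the denominator asymptotics from Theorem \ref{thm.main.orlicz} together with Remark \ref{rem.discussion.wlln} c).
\end{itemize}
Your fallback contradiction argument is also valid, once you first rescale a violating $W$ by homogeneity so that $\|W_k\|_p=2^{-k}$ and $R(W_k)>2^k$.
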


\begin{proof}
    Since $R$ is coherent, by Corollary \ref{cor.cont.poshom}, there exists a constant $C\geq 0$ such that
    \[
      \sup_{Y\sim Z}|R(Y)|\leq \sup_{Y\sim Z}C\|Y\|_p=C\|Z\|_p<\infty.
    \]
    Moreover, for all $n\in \N$
    \[
     \sup_{Y\sim Z}\sum_{i=1}^n\la_iR(Y) \leq R_n^{\rm WC}(Z)\leq \sum_{i=1}^n\la_i\sup_{Y\sim Z}R(Y)=\sup_{Y\sim Z}\sum_{i=1}^n\la_iR(Y).
    \]
    By positive homogeneity of $R$ and Theorem \ref{thm.main.orlicz} with Remark \ref{rem.discussion.wlln} c), we thus obtain
    \[
    \lim_{n\to \infty}\frac{R_n^{\rm WC}(Z)}{R\big(\sum_{i=1}^n\la_i X_i\big)} =\lim_{n\to \infty}\sup_{Y\sim Z}\frac{R(Y)}{R\big(\tfrac{1}{n\overline\la_n}\sum_{i=1}^n \la_iX_i\big)}=\sup_{Y\sim Z}\frac{R(Y)}{\E[X_1]}<\infty.
    \]
\end{proof}

\section{Continuity at the Origin and Monotonic Functionals on $L^0$}\label{sec.continuity}

Throughout this section, let $(\Om,\Fc,\P)$ be a probability space. For a metric space $(S,d)$, endowed with the Borel $\sigma$-algebra $\Bc$, we denote by $L^0(\Om;S)$ the set of all ($\P$-equivalence classes of) strongly measurable functions taking values in $S$ $\P$-a.s.\ Recall that a function $X\colon \Omega \to S$ is strongly measurable if and only if it is $\Fc$-$\Bc$-measurable and there is a $\P$-null set $N\in \Fc$ such that $X(\Om\setminus N)$ is separable.\ We endow $L^0(\Om;S)$ with the topology of convergence in probability, i.e., the topology induced by the metric
\[
 d_\P(X,Y):=\E[d(X,Y)\wedge 1]\quad \text{for }X,Y\in L^0(\Om;S).
\]

\begin{remark}\label{rem.omegaone}
 Considering the case where $\#\Om=1$, i.e., $\Om$ is a singleton, with $\Fc$ being the power set, all results obtained in this section concerning continuity properties w.r.t.\ convergence in probability on $L^0(\Om;S)$ can be transferred to continuity properties on $S$, where $S=C$ will be a closed convex cone in a Banach space $B$ (including $C=B$).
\end{remark}

\begin{proposition}\label{prop.cont.cone.L0}
Let $B$ be a Banach space, $C\subseteq B$ be a closed convex cone, $L$ be a topological space, and $f\colon L^0(\Om;C)\to L$.\ Assume that, for every sequence $(X_n)_{n\in \N}\subset L^0(\Om;C)$,
\[
f(X_n)\to f(0) \quad \text{as }n\to \infty
\]
whenever there exist $X^*\in L^0(\Om;C)$ and a null sequence $(\la_n)_{n\in \N}\subset (0,\infty)$ such that $\lambda_n X^*-X_n\in C$ $\P$-a.s.\ for all $n\in \N$.\ 
Then, $f$ is continuous at the origin.
\end{proposition}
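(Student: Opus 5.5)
The plan is to prove sequential continuity at the origin and then upgrade it to continuity by a subsequence argument. The topology on $L^0(\Om;C)$ is induced by the metric $d_\P$, so continuity of $f$ at the origin is equivalent to $f(X_n)\to f(0)$ for every sequence $(X_n)_{n\in\N}\subset L^0(\Om;C)$ with $X_n\to 0$ in probability. Since $L$ is only a topological space, I first note the following reduction: it suffices that every subsequence of $(X_n)_{n\in\N}$ has a further subsequence along which $f$ converges to $f(0)$. Indeed, otherwise there is an open neighbourhood $U$ of $f(0)$ and a subsequence with $f(X_{n_k})\notin U$ for all $k$. No further subsequence of it can converge to $f(0)$, a contradiction. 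Note that the hypothesis of the proposition is itself stated for sequences, so it can be applied directly to any such sub-subsequence.

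Now fix a subsequence, still denoted $(X_n)_{n\in\N}$, which converges to $0$ in probability. I choose indices $n_1<n_2<\cdots$ such that
\[
\P\big(\|X_{n_k}\|>4^{-k}\big)\leq 2^{-k}\quad\text{for all }k\in\N.
\]
By the Borel--Cantelli lemma, for $\P$-a.e.\ $\om$ there is some $k_0(\om)$ with $\|X_{n_k}(\om)\|\leq 4^{-k}$ for all $k\geq k_0(\om)$. Consequently the series $\sum_{k}2^k\|X_{n_k}(\om)\|$ converges for $\P$-a.e.\ $\om$, because only finitely many terms are not bounded by $2^{-k}$.

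Next I define
\[
X^*:=\sum_{k=1}^\infty 2^kX_{n_k}.
\]
The series converges absolutely in the Banach space $B$ outside a null set, and on that null set I set $X^*:=0$. Then $X^*$ is a pointwise a.s.\ limit of strongly measurable functions and hence strongly measurable. Every partial sum lies in $C$ because $C$ is a convex cone, and the limit lies in $C$ because $C$ is closed, so $X^*\in L^0(\Om;C)$.

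With $\la_k:=2^{-k}$, which is a null sequence in $(0,\infty)$, I then compute
\[
\la_kX^*-X_{n_k}=\sum_{j\neq k}2^{j-k}X_{n_j}\quad \P\text{-a.s.}
\]
This is again an a.s.\ convergent series of elements of $C$, so it lies in $C$ by the same closedness and cone argument. Hence the sequence $(X_{n_k})_{k\in\N}$ satisfies the domination hypothesis of the proposition with $X^*$ and $(\la_k)_{k\in\N}$, and therefore $f(X_{n_k})\to f(0)$. Combined with the subsequence reduction, this gives continuity of $f$ at the origin.

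The main obstacle is the construction of the dominating element $X^*$. The growth factors $2^k$ must be chosen so that the series still converges almost surely while the cone-order domination $\la_kX^*-X_{n_k}\in C$ holds exactly. The speed $4^{-k}$ in the choice of the subsequence is what balances these two requirements. The remaining points are routine: the measure-theoretic details (strong measurability, the null set) and the use of closedness of $C$ to pass membership in $C$ to the limit.
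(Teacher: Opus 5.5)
Your proof is correct and is essentially the paper's argument. The paper also passes to a sub-subsequence with $\P(\|X_n\|>\lambda_n 2^{-n})\leq 2^{-n}$ and sets $X^*:=\sum_k \lambda_k^{-1}X_k$. It then uses Borel--Cantelli and closedness of the cone to get $\lambda_nX^*-X_n\in C$. The only difference is that the paper allows an arbitrary null sequence $(\lambda_n)$, whereas you fix $\lambda_k=2^{-k}$. You also spell out the subsequence principle for a general topological space $L$ and the strong measurability of $X^*$, both of which the paper leaves implicit.
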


\begin{proof}
 Let $(Y_n)_{n\in \N}\subset L^0(\Om;C)$ be a sequence that converges to zero in probability and $(\la_n)_{n\in \N}\subset (0,\infty)$ be an arbitrary null sequence.\ Then, every subsequence of $(Y_n)_{n\in \N}$ has a further subsequence $(X_n)_{n\in \N}$ with $$\P\big( \|X_n\|> \lambda_n2^{-n}\big)\leq 2^{-n}\quad\text{for all }n\in \N.$$ Since $B$ is a Banach space and $C\subseteq B$ is a closed convex cone, by the Borel-Cantelli lemma, $X^*:=\sum_{k\in \N} \la_k^{-1}X_k\in C$ $\P$-a.s.\ and $$\la_n X^*-X_n=\sum_{\substack{k\in \N\\ k\neq n}} \tfrac{\la_n}{\la_k}X_k\in C\quad \P\text{-a.s.\ for all }n\in \N,$$
 so that $f(X_n)\to f(0)$ as $n\to \infty$.\ We have therefore shown that $f(Y_n)\to f(0)$ as $n\to \infty$ for every sequence $(Y_n)_{n\in \N}\subset L^0(\Om;C)$ that converges to zero in probability.\
\end{proof}

We now apply the previous proposition to monotonic functionals. To that end, we specialize to Banach lattices $B$ and the closed convex cone $$B_+:=\{x\in B\colon x\geq0\}.$$ 
Moreover, we consider a partially ordered set $(L,\leq)$.\ The order on $L$ naturally induces a topology, which is referred to as the interval topology.\ For $a,b \in L$, one considers sets of the form
\[
(-\infty, a] := \{ x \in L \colon  x\leq a \}\quad \text{and}\quad [b,\infty) := \{ x \in L \colon b \leq x \},
\]
and defines the \textit{interval topology}, cf.\ \cite[Section IV.8, p.\ 59--62]{Birkhoff1948} and \cite{Frink1942}, as the topology generated by
\[
 \big\{L\setminus (-\infty,a]\colon a\in L\big\}\cup  \big\{L\setminus [b,\infty)\colon b\in L\big\}.
\]
Moreover, we consider the \textit{lsc-topology} and the \textit{usc-topology}, which are the topologies generated by
\[
 \big\{L\setminus (-\infty,a]\colon a\in L\big\}\quad\text{and}\quad  \big\{L\setminus [b,\infty)\colon b\in L\big\},
\]
respectively.\ We say that a map $f\colon L^0(\Om;B)\to L$ is {lower} or {upper semicontinuous} if $f$ is continuous w.r.t.\ the lsc- or usc-topology on $L$, respectively.\ Moreover, we say that $f\colon L^0(\Om;B)\to L$ is continuous if it is continuous w.r.t.\ the interval topology on $L$ or, equivalently, if it is lower and upper semicontinuous.

\begin{theorem}\label{thm.cont.monotone.L0}
 Let $B$ be a Banach lattice, $(L,\leq)$ be a partially ordered set, endowed with the interval topology, $R\colon L^0(\Om;B)\to L$ be monotonic, and $X_0\in L^0(\Om;B)$.\ Then,
\begin{enumerate}
 \item[a)] $R$ is upper semicontinuous at $X_0$ w.r.t.\ convergence in probability if and only if
 \begin{equation}\label{eq.usc.monotone.L0}
 R(X_0)=\inf_{\la>0}R(X_0+\la X)\quad \text{for all }X\in L^0(\Om;B_+),
 \end{equation}
 \item[b)] $R$ is lower semicontinuous at $X_0$ w.r.t.\ convergence in probability if and only if
 \begin{equation}\label{eq.lsc.monotone.L0}
 R(X_0)=\sup_{\la>0}R(X_0-\la X)\quad \text{for all }X\in L^0(\Om;B_+),
 \end{equation}
  \item[c)] $R$ is continuous at $X_0$ w.r.t.\ convergence in probability if and only if
 \begin{equation}\label{eq.cont.monotone.L0}
 \sup_{\la>0}R(X_0-\la X)=R(X_0)=\inf_{\la>0}R(X_0+\la X)\quad \text{for all }X\in L^0(\Om;B_+).
 \end{equation}
\end{enumerate}
\end{theorem}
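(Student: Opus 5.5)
The plan is to prove a) directly and obtain b) from a) by reversing the order. Part b) is a) for $\tilde R(X):=R(-X)$ with $L$ carrying the reversed order: reversing the order swaps the lsc- and usc-topologies, and $X\mapsto -X$ is a homeomorphism of $L^0(\Om;B)$ mapping $X_0$ to $-X_0$. Part c) then follows from a) and b), since continuity w.r.t.\ the interval topology means lower and upper semicontinuity together.

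For a), necessity is short. Let $X\in L^0(\Om;B_+)$. Then $R(X_0)\le R(X_0+\la X)$ for all $\la>0$, so $R(X_0)$ is a lower bound. Let $b$ be another lower bound and suppose $b\not\le R(X_0)$. Then $R(X_0)\in L\setminus[b,\infty)$, which is usc-open. Since $X_0+\la X\to X_0$ in probability as $\la\to0$, upper semicontinuity gives $R(X_0+\la X)\notin[b,\infty)$ for small $\la$. This contradicts $b\le R(X_0+\la X)$. Hence $R(X_0)$ is the infimum.

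For sufficiency, assume \eqref{eq.usc.monotone.L0}. Take a usc-open neighbourhood of $R(X_0)$. Since the usc-topology is generated by the sets $L\setminus[b,\infty)$, a basic neighbourhood has the form $\bigcap_{j=1}^k L\setminus[b_j,\infty)$ with $b_j\not\le R(X_0)$. Let $Y_n\to X_0$ in probability. It suffices to show that, for each $j$, eventually $b_j\not\le R(Y_n)$, and by the usual subsequence argument it suffices to find this along a subsequence of an arbitrary subsequence. Put $W_n:=|Y_n-X_0|\in L^0(\Om;B_+)$, which tends to $0$ in probability because $\||x|\|=\|x\|$. By monotonicity, $R(Y_n)\le R(X_0+W_n)$.

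Now I use the construction from the proof of Proposition \ref{prop.cont.cone.L0} with $C=B_+$. Fix a null sequence $(\la_n)$ and pass to a subsequence with $\P(\|W_n\|>\la_n2^{-n})\le 2^{-n}$. Set $X^*:=\sum_k\la_k^{-1}W_k\in L^0(\Om;B_+)$, which converges a.s.\ by Borel--Cantelli; then $W_n\le\la_nX^*$ a.s. Therefore
\[
R(Y_n)\le R(X_0+W_n)\le R(X_0+\la_nX^*).
\]
Suppose $b_j\le R(Y_n)$ for infinitely many $n$. Then $b_j\le R(X_0+\la_nX^*)$ for infinitely many $n$. Because $\la\mapsto R(X_0+\la X^*)$ is nondecreasing and $\la_n\to0$, it follows that $b_j\le R(X_0+\la X^*)$ for every $\la>0$. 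So $b_j$ is a lower bound, and by \eqref{eq.usc.monotone.L0}, $b_j\le R(X_0)$, a contradiction. Applying this to each of the finitely many $j$ gives eventual membership in the neighbourhood.

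The main obstacle is that $L$ is only a poset, so there is no metric structure and no lattice operations. The argument has to go through open sets in the interval topology and through the infimum condition alone. The key step is converting convergence in probability into order domination by $\la_nX^*$, and I expect the Proposition \ref{prop.cont.cone.L0} construction with $C=B_+$ to supply exactly that. Some care is also needed to combine the subsequence extraction with the finitely many basic sets, but this is routine.
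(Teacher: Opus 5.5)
Your proof is correct and follows the paper's argument. Necessity uses the usc-open set $L\setminus[b,\infty)$ along $X_0+\la X\to X_0$; sufficiency dominates $Y_n$ by $X_0+\la_nX^*$ via the Borel--Cantelli construction of Proposition \ref{prop.cont.cone.L0} and then applies the infimum condition; b) and c) follow by order reversal. The differences are cosmetic: you inline that construction instead of invoking the proposition, you work with $|Y_n-X_0|$ at a general $X_0$ rather than translating to $X_0=0$ and using $Y_n^+$, and you make explicit the sign flip $\tilde R(X)=R(-X)$ that the paper's ``switching the order on $L$'' leaves implicit.
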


\begin{proof}
First, assume that \eqref{eq.usc.monotone.L0} is satisfied.\
W.l.o.g.,\ let $X_0=0$ or else consider $\overline R(X):=R(X_0+X)$ for all $X\in L^0(\Om;B)$ instead of $R$.\ Since $B$ is a Banach lattice, it follows that $B_+$ is a closed convex cone.\ Let $(X_n)_{n\in \N}\subset L^0(\Om;B_+)$ with $X_n\leq \lambda_n X^*$ for some $X^*\in L^0(\Om;B_+)$ and some null sequence $(\la_n)_{n\in \N}\subset (0,\infty)$ and $b\in L$ with $R(0)\notin [b,\infty)$.\ Then, there exists some $\la_0>0$ such that $R(\la X^*)\notin [b,\infty)$ for all $\la\in (0,\la_0)$.\ Otherwise, $b$ would be a lower bound of the set $\{R(\la X^*)\colon \la>0\}$ and therefore $b\leq R(0)$ by \eqref{eq.usc.monotone.L0}, which would be a contradiction. Since $R(X_n)\leq R(\la_nX^*)$ for all $n\in \N$, it follows that $R(X_n)\notin [b,\infty)$ for all $n\in \N$ with $\la_n<\la_0$, so that $R(X_n)\to R(0)$ as $n\to \infty$ in the usc-topology on $L$. By Proposition \ref{prop.cont.cone.L0}, for every sequence $(Y_n)_{n\in \N}\subset L^0(\Om;B)$ with $Y_n\to 0$ in probability as $n\to \infty$ and every $b\in L$, there exists $n_0\in \N$ such that $$R(Y_n)\leq R(Y_n^+)\notin [b,\infty)\quad \text{for all }n\in \N\text{ with }n\geq n_0,$$ so that $R$ is upper semicontinuous w.r.t.\ convergence in probability at the origin.

Now, assume that $R$ is upper semicontinuous w.r.t.\ convergence in probability at $X_0$, and let $X\in L^0(\Om;B_+)$.\ Let $b\in L$ be a lower bound for $\{R(X_0+\la X)\colon \la>0\}$, i.e., $R(X_0+\la X)\in [b,\infty)$ for all $\la>0$. Since $R$ is upper semicontinuous at $X_0$ w.r.t.\ convergence in probability, this implies that $R(X_0)\in [b,\infty)$, so that $R(X_0)\geq b$. By monotonicity of the map $R$, $R(X_0)$ itself is a lower bound for the set $\{R(X_0+\la X)\colon \la>0\}$, so that \eqref{eq.usc.monotone.L0} holds. We have therefore proved part a).

Part b) now follows from part a) by switching the order on $L$ and part c) is a direct consequence of a) and b).
\end{proof}

We now focus on the case where the partially ordered set $(L,\leq)$ is a vector lattice.\ We refer to \cite[Section V.7]{ScWo1999} for an introduction to topological vector lattices.\ For $u\in L$, we use the notation $$u_+:=u\vee 0\quad \text{and}\quad |u|:=u_++(-u)_+.$$ A map $p\colon L\to [0,\infty)$ is called a \textit{lattice seminorm} if it is a seminorm that is compatible with the lattice structure in the sense that
\[
p(u) \leq p(v)\quad \text{for all }u,v\in L\text{ with }|u| \leq |v|.
\]
Observe that, by definition,
\[
 p(u) \leq p(v)\quad \text{for all }u,v\in L\text{ with }0\leq u\leq v
\]
 for every lattice seminorm $p\colon L\to [0,\infty)$.

\begin{theorem}\label{thm.cont.lattice.L0}
 Let $B$ be a Banach lattice, $L$ be a vector lattice, $R\colon L^0(\Om;B)\to L$ be monotonic, $p\colon L\to [0,\infty)$ be a lattice seminorm and $X_0\in L^0(\Om;B)$.\
 Then, the map $$L^0(\Om;B)\to [0,\infty),\quad  X\mapsto p\big(R(X)-R(X_0)\big)$$ is continuous at $X_0$ if and only if
 \begin{equation}\label{eq.cont.lattice.monotone}
 \lim_{\la\to 0}p\big(R(X_0+\la X)-R(X_0)\big)=0\quad \text{for all }X\in L^0(\Om;B_+).
 \end{equation}
\end{theorem}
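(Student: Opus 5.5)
Necessity is immediate: if $X\mapsto p(R(X)-R(X_0))$ is continuous at $X_0$ w.r.t.\ convergence in probability, then for fixed $X\in L^0(\Om;B_+)$ we have $X_0+\la X\to X_0$ in probability as $\la\to0$, which yields \eqref{eq.cont.lattice.monotone}. The real content is sufficiency. As in the proof of Theorem \ref{thm.cont.monotone.L0}, I would first reduce to $X_0=0$ by passing to $\overline R(X):=R(X_0+X)-R(X_0)$. This map is still monotone, since translation in $L$ preserves order, and it satisfies $\overline R(0)=0$. So it suffices to show that $p(R(Y_n))\to0$ whenever $Y_n\to0$ in probability, assuming $R(0)=0$ and $\lim_{\la\to0}p(R(\la X))=0$ for all $X\ge0$.

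\emph{Sandwich by positive and negative parts.} Given $Y_n\to0$ in probability, monotonicity gives $R(-Y_n^-)\le R(Y_n)\le R(Y_n^+)$, where $Y_n^\pm\in L^0(\Om;B_+)$ both tend to $0$ in probability, since the lattice operations are continuous on $B$. In a vector lattice, $a\le u\le b$ with $a\le0\le b$ implies $|u|\le b-a=|a|+|b|$, because $u_+\le b$ and $u_-\le a_-=-a$. Hence, since $p$ is a lattice seminorm,
\[
p\big(R(Y_n)\big)\le p\big(R(Y_n^+)\big)+p\big(R(-Y_n^-)\big).
\]
Here $R(Y_n^+)\ge R(0)=0$ and $R(-Y_n^-)\le0$. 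It therefore suffices to treat two monotone one-sided problems: sequences in $L^0(\Om;B_+)$, and their negatives.

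\emph{Reduction to dominated sequences.} I would apply Proposition \ref{prop.cont.cone.L0} with $C=B_+$, with $f(X):=p(R(X))$ in the first case and $f(X):=p(R(-X))$ in the second, both regarded as maps $L^0(\Om;B_+)\to[0,\infty)$. It then suffices to check $f(X_n)\to0$ whenever $0\le X_n\le\la_nX^*$ for some $X^*\in L^0(\Om;B_+)$ and some null sequence $(\la_n)_{n\in\N}$. For the positive part, monotonicity gives $0=R(0)\le R(X_n)\le R(\la_nX^*)$, so $p(R(X_n))\le p(R(\la_nX^*))\to0$ by \eqref{eq.cont.lattice.monotone}. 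For the negative part, $R(-\la_nX^*)\le R(-X_n)\le0$, so $|R(-X_n)|\le|R(-\la_nX^*)|$.

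\emph{Main obstacle.} The remaining step is to get $p(R(-\la X^*))\to0$ as $\la\to0$, which \eqref{eq.cont.lattice.monotone} only asserts for $+\la X$ with $X\ge0$. I would first check whether the intended reading of \eqref{eq.cont.lattice.monotone} allows $X\in L^0(\Om;B_-)$ as well, or, equivalently, whether the statement should be read with $X\in L^0(\Om;B)$. That reading is what the application in Theorem \ref{thm.main.orlicz} needs anyway. For the cash-additive $R$ used there, positive and negative directions can be linked through translation by constants. If only the positive direction is genuinely available, I would state that the downward continuity along rays must also be assumed. In either case, once both rays are controlled, Proposition \ref{prop.cont.cone.L0} together with the sandwich inequality finishes the proof.
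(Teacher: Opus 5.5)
Your argument is correct and is essentially the paper's proof: the paper likewise bounds the positive and negative parts of $R(X_0+X)-R(X_0)$ by the one-sided monotone functionals $p\big(R(X_0+|X|)-R(X_0)\big)$ and $p\big(R(X_0)-R(X_0-|X|)\big)$ (using $|X|$ where you use $Y^\pm$). It treats each of them with Theorem \ref{thm.cont.monotone.L0} a), which is exactly your domination argument via Proposition \ref{prop.cont.cone.L0}. The obstacle you flag is genuine and is resolved as you suspect: applying Theorem \ref{thm.cont.monotone.L0} a) to the second functional requires $\lim_{\la\downarrow 0}p\big(R(X_0)-R(X_0-\la X)\big)=0$, so the paper implicitly reads $\la\to 0$ in \eqref{eq.cont.lattice.monotone} over both signs, and the one-sided reading is in fact false (take $\#\Om=1$, $B=L=\R$, $p=|\cdot|$, $X_0=0$, $R=\mathbf{1}_{[0,\infty)}$). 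Your aside that cash-additivity links the two directions only works for $X$ bounded above by a constant: on $L^1([0,1])$, $R(Y)=\E[Y]-\mathbf{1}_{\{\mathrm{ess\,inf}\,Y=-\infty\}}$ is monotone and cash-additive with $R(\la X)\to 0$, but $R(-\la X)\to -1$ for every $X\geq 0$ that is unbounded above.
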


\begin{proof}
 Clearly, continuity of the map $$L^0(\Om;B)\to [0,\infty),\quad  X\mapsto p\big(R(X)-R(X_0)\big)$$ in $X_0$ implies \eqref{eq.cont.lattice.monotone}.\ On the other hand,
    \[
 p\Big(\big(R(X_0+X)-R(X_0)\big)_+\Big)\leq p\big(R(X_0+|X|)-R(X_0)\big)
 \]
 and
     \[
 p\Big(\big(R(X_0)-R(X_0+X)\big)_+\Big)\leq p\big(R(X_0)-R(X_0-|X|)\big).
 \]
 Since $R$ is monotonic and $p$ is a lattice seminorm, the maps
 \[
  L^0(\Om;B)\to [0,\infty),\quad X\mapsto \begin{cases}
p\big(R(X_0+X)-R(X_0)\big),& X\in L^0(\Om;B_+),\\
  0, & X\notin  L^0(\Om;B_+),
  \end{cases}
 \]
 and
 \[
 L^0(\Om;B)\to [0,\infty),\quad X\mapsto \begin{cases}
  p\big(R(X_0)-R(X_0-X)\big),& X\in  L^0(\Om;B_+),\\
  0, &X\notin  L^0(\Om;B_+),
  \end{cases}
 \]
 are monotonic, and
 the statement follows from Theorem \ref{thm.cont.monotone.L0} a) together with the fact that convergence of a sequence $(X_n)_{n\in \N}\subset L^0(\Om;B)$ to zero in probability implies convergence of the sequence $(|X_n|)_{n\in \N}$ to zero in probability. 
\end{proof}

Let $B$ be a Banach lattice and $L$ be a vector lattice.\ We now specialize to the case where $R\colon B\to L$ is monotonic and positively homogeneous of order $\alpha>0$, i.e.,
\[
R(\la x)=\la^\al R(x)\quad \text{for all }x\in B\text{ and }\la> 0. 
\]
The following corollary is an extension of the well-known Namioka-Klee theorem, cf.\ \cite{Namioka1957}, which states that every positive linear operator from a Banach lattice to a topological vector lattice is bounded, see also \cite[Theorem 4.3, p.\ 182]{AB2006}.

\begin{corollary}\label{cor.cont.poshom}
 Let $B$ be a Banach lattice, $L$ be a vector lattice, and $R\colon B\to L$ be monotonic and positively homogeneous of order $\al>0$.\
Then, for every lattice seminorm $p\colon L\to [0,\infty)$, there exists a constant $C\geq 0$ such that
 \[
 p\big(R(x)\big)\leq C\|x\|^\al\quad \text{for all }x\in B.
 \]
 In particular, $R$ is continuous at the origin w.r.t.\ every lattice seminorm $p\colon L\to [0,\infty)$.
\end{corollary}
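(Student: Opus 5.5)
The plan is to argue by contradiction and obtain the bound from the summable-series trick of Proposition~\ref{prop.cont.cone.L0}, specialised to $\#\Om=1$ (Remark~\ref{rem.omegaone}) and $C=B_+$.

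\emph{Reduction to the positive cone.} Since $R$ is positively homogeneous of order $\al$, we have $R(0)=\la^\al R(0)$ for all $\la>0$, so $R(0)=0$. For $x\in B$ we have $-|x|\le x\le |x|$, so monotonicity gives $R(-|x|)\le R(x)\le R(|x|)$. Hence
\[
0\le R(x)_+\le R(|x|)_+ \quad\text{and}\quad 0\le (-R(x))_+\le (-R(-|x|))_+ .
\]
Because $p$ is a lattice seminorm and $|u|=u_++(-u)_+$, this yields
\[
p(R(x))\le p\big(R(|x|)_+\big)+p\big((-R(-|x|))_+\big).
\]
Since $\||x|\|=\|x\|$, it suffices to find constants $C_1,C_2$ with
\[
p\big(R(y)_+\big)\le C_1\|y\|^\al \quad\text{and}\quad p\big((-R(-y))_+\big)\le C_2\|y\|^\al \qquad\text{for all } y\in B_+ .
\]
The maps $y\mapsto R(y)_+$ and $y\mapsto (-R(-y))_+$ are monotone on $B_+$, take values in $L_+$, and are positively homogeneous of order $\al$. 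So we may assume $R\colon B_+\to L_+$ is monotone and $\al$-homogeneous.

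\emph{Main step (contradiction).} Suppose no constant works. Then there are $y_n\in B_+$ with $\|y_n\|=1$ (normalise by homogeneity) and $p(R(y_n))\ge 4^{n\al}$. Put
\[
y^*:=\sum_n 2^{-n}y_n .
\]
This series converges absolutely in the Banach space $B$, and $y^*\in B_+$ because $B_+$ is a closed cone. For each $n$ we have $y^*-2^{-n}y_n\in B_+$, so monotonicity and homogeneity give
\[
R(y^*)\ge R(2^{-n}y_n)=2^{-n\al}R(y_n)\ge 0 .
\]
Since $p$ is a lattice seminorm, this gives $p(R(y^*))\ge 2^{-n\al}4^{n\al}=2^{n\al}\to\infty$, a contradiction. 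This is exactly the Namioka--Klee mechanism; I expect it to be the only real content of the proof.

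\emph{Continuity.} The final claim follows at once: $p(R(x)-R(0))=p(R(x))\le C\|x\|^\al\to0$ as $x\to0$. The main point needing care is the reduction step, namely checking that the positive and negative parts are handled correctly using only monotonicity of $R$ and not linearity. Equivalently, one could invoke Theorem~\ref{thm.cont.lattice.L0} for continuity, but continuity alone does not give the quantitative bound, so the direct series argument is preferable.
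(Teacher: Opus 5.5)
Your proof is correct, but it takes a different route from the paper's. The paper does not redo the series argument. It notes $R(0)=0$ and that $p(R(\la x))=\la^\al p(R(x))\to 0$ as $\la\to 0$ for every $x\in B$. It then invokes Theorem~\ref{thm.cont.lattice.L0} with $\#\Om=1$ to get continuity of $x\mapsto p(R(x))$ at the origin, hence some $\de>0$ with $p(R(x))\le 1$ for $\|x\|\le\de$. Homogeneity then gives $p(R(x))=\de^{-\al}\|x\|^\al\, p\big(R(\de x/\|x\|)\big)\le \de^{-\al}\|x\|^\al$.

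So your closing remark that ``continuity alone does not give the quantitative bound'' is misleading. Continuity at the origin combined with $\al$-homogeneity yields the bound at once by this rescaling, and that is exactly how the paper obtains it.

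What you do instead is inline the machinery behind Theorem~\ref{thm.cont.lattice.L0}:
\begin{itemize}
\item Your reduction via $R(x)_+\le R(|x|)_+$ and $(-R(x))_+\le(-R(-|x|))_+$ mirrors the positive/negative-part splitting in that theorem's proof.
\item Your series $y^*=\sum_n 2^{-n}y_n$ is the series trick of Proposition~\ref{prop.cont.cone.L0} in the trivial-$\Om$ case.
\end{itemize}

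Your version gives a self-contained, classical Namioka--Klee-style proof. It avoids the interval-topology apparatus and uses homogeneity directly inside the contradiction. The paper's version is shorter given Section~\ref{sec.continuity}. It also makes clear that continuity at the origin comes from monotonicity alone. Homogeneity is needed only to verify the scalar condition \eqref{eq.cont.lattice.monotone} and to upgrade the local bound to a global one.
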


\begin{proof}
 Let $p\colon L\to [0,\infty)$ be a lattice seminorm.\ Then, $R(0)=\la^\al R(0)$ for all $\la>0$, so that $R(0)=0$.\ Hence, 
 \[
 \lim_{\la\to 0}p\big(R(\la x)-R(0)\big)=\lim_{\la\to 0}p\big(R(\la x)\big)=\lim_{\la\to 0}\la^\al p\big(R(x)\big)=0\quad \text{for all }x\in B.
 \]
 By Theorem \ref{thm.cont.lattice.L0}, there exists some $\de>0$ such that
 \[
 p(R(x))\leq 1\quad\text{for all }x\in B\text{ with }\|x\|\leq \de.
 \]
 Now, for $C:=\de^{-\al}$, it follows that
 \[
 p\big(R(x)\big)=C\|x\|^\al p\Big(R\big(\tfrac{\de x}{\|x\|}\big)\Big)\leq C\|x\|^\al\quad \text{for all }x\in B\setminus\{0\}.
 \]
 For $x=0$, we have $p\big(R(x)\big)=p(0)=0=C\|x\|^\al$.
\end{proof}

Another direct consequence of Theorem \ref{thm.cont.lattice.L0} is the convex version of the Namioka-Klee theorem, cf. \cite[Corollary 2]{BF2009}.

\begin{corollary}\label{cor.cont.convex}
 Let $B$ be a Banach lattice, $L$ be a vector lattice, and $R\colon B\to L$ be monotonic and convex, i.e.,
\[
R\big(\la x+(1-\la)y\big)\leq \la R(x)+(1-\la)R(y)\quad \text{for all }x,y\in B\text{ and }\la\in (0,1). 
\]
Then, $R$ is continuous w.r.t.\ every lattice seminorm $p\colon L\to [0,\infty)$.
\end{corollary}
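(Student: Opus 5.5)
The plan is to reduce Corollary \ref{cor.cont.convex} to Theorem \ref{thm.cont.lattice.L0}, applied with $\#\Om=1$ as in Remark \ref{rem.omegaone}, at an arbitrary point $x_0\in B$. That theorem says the map $x\mapsto p\big(R(x)-R(x_0)\big)$ is continuous at $x_0$ as soon as
\[
\lim_{\la\to 0}p\big(R(x_0+\la x)-R(x_0)\big)=0\quad\text{for all }x\in B_+ .
\]
So it suffices to verify this one-directional radial condition at every $x_0$, using only convexity and monotonicity. Since $p$ is a seminorm, $|p(R(x))-p(R(x_0))|\le p(R(x)-R(x_0))$, and continuity of $R$ with respect to $p$ at every point follows.

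Fix $x_0\in B$, $x\in B_+$ and $\la\in(0,1)$. By monotonicity, $R(x_0+\la x)-R(x_0)\ge 0$. For the upper bound, write $x_0+\la x=(1-\la)x_0+\la(x_0+x)$; convexity gives
\[
0\le R(x_0+\la x)-R(x_0)\le \la\big(R(x_0+x)-R(x_0)\big).
\]
Both sides are positive elements of $L$. Because $p$ is a lattice seminorm, it is monotone on positive elements, so
\[
p\big(R(x_0+\la x)-R(x_0)\big)\le \la\, p\big(R(x_0+x)-R(x_0)\big)\to 0\quad\text{as }\la\to0 .
\]

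The condition in the theorem is stated as a two-sided limit $\la\to 0$, so I also need to handle $\la<0$, that is, $R(x_0)-R(x_0-\mu x)$ with $\mu\downarrow 0$. This is the only mildly delicate step. I would use convexity at the midpoint: $x_0=\tfrac12(x_0-\mu x)+\tfrac12(x_0+\mu x)$ gives
\[
0\le R(x_0)-R(x_0-\mu x)\le R(x_0+\mu x)-R(x_0),
\]
where the left inequality again comes from monotonicity. The right-hand side is controlled by the previous estimate, so applying $p$ and its monotonicity on positive elements gives $p\big(R(x_0)-R(x_0-\mu x)\big)\le \mu\, p\big(R(x_0+x)-R(x_0)\big)\to 0$. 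Even if Theorem \ref{thm.cont.lattice.L0} is read with $\la\to0$ from the right only, its proof uses both the one-sided quantities $R(x_0+|X|)-R(x_0)$ and $R(x_0)-R(x_0-|X|)$, and both are covered by these two estimates. Therefore the theorem applies at every $x_0$, which yields the corollary.
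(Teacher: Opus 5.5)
Your proposal is correct and follows the paper's route: fix $x_0$, verify the radial condition of Theorem \ref{thm.cont.lattice.L0} (in the case $\#\Om=1$) using convexity and monotonicity, and then apply the theorem. Your explicit bound $0\le R(x_0+\la x)-R(x_0)\le \la\big(R(x_0+x)-R(x_0)\big)$ and your midpoint inequality for the left-sided limit spell out what the paper compresses into the claim that $\la\mapsto p\big(R(x_0+\la x)-R(x_0)\big)$ is convex on $[0,1]$ and hence continuous at zero; that claim on its own only covers $\la\downarrow 0$.
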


\begin{proof}
 Let $p\colon L\to [0,\infty)$ be a lattice seminorm and $x_0\in B$.\ Since $R$ is convex and monotonic and $p$ is a lattice seminorm, the map
 \[
 [0,1]\to [0,\infty),\quad \la\mapsto p\big(R(x_0+\la x)-R(x_0)\big)
 \]
 is convex and therefore continuous at zero for all $x\in B$ with $x\geq 0$.\ Now, the statement follows from Theorem \ref{thm.cont.lattice.L0}.
\end{proof}

\end{document}